\documentclass[pdflatex,a4paper]{article}
\usepackage{graphicx}
\usepackage{hyperref}
\usepackage{multirow,amsmath,amssymb,amsfonts,amsthm}
\theoremstyle{plain}
\newtheorem{theorem}{Theorem}[subsection]
\newtheorem{proposition}{Proposition}[subsection]
\newtheorem{lemma}{Lemma}[subsection]

\theoremstyle{remark}

\theoremstyle{definition}
\newtheorem{definition}{Definition}[subsection]
\author{{Nathan Thomas} {Carruth}\thanks{nathan.carruth@cantab.net}}
\title{Trapped surface formation via radial boosts}

\newcommand{\chihat}{\hat{\chi}}
\newcommand{\chihatz}{\chihat_0}
\newcommand{\R}{\mathbb{R}}
\newcommand{\Sph}{\mathbb{S}}
\newcommand{\chrangulvar}{\omega}
\newcommand{\chrubar}{{\underline{u}}}
\newcommand{\curangulvar}{\omega}
\newcommand{\curangulvarp}{{\curangulvar'}}
\newcommand{\curdr}{{\delta r}}
\newcommand{\curdrp}{{\curdr'}}
\newcommand{\curphi}{\phi}
\newcommand{\curphip}{{\curphi'}}
\newcommand{\currz}{{r_0}}
\newcommand{\curtheta}{\theta}
\newcommand{\curthetap}{{\curtheta'}}
\newcommand{\curt}{t}
\newcommand{\curtp}{{\curt'}}
\newcommand{\curubar}{{\underline{u}}}
\newcommand{\curubarp}{{\curubar'}}
\newcommand{\curubarrot}{{\curubar''}}
\newcommand{\curu}{u}
\newcommand{\curup}{{\curu'}}
\newcommand{\cururot}{{\curu''}}
\newcommand{\mink}{\eta}
\newcommand{\etaib}{{\mink_\infty}}
\newcommand{\etap}{{\mink'}}
\newcommand{\etarot}{{\mink''}}
\newcommand{\etazp}{\etaib}
\newcommand{\krangulvar}{\omega}
\newcommand{\krubar}{{\underline{u}}}
\newcommand{\ntcubar}{{\underline{u}}}
\newcommand{\ntcu}{u}
\newcommand{\ntcx}{x}
\newcommand{\rotangle}{\psi}
\newcommand{\schi}{\chi}
\newcommand{\tr}[1]{\text{tr} #1}
\renewcommand{\L}{L}

\newcommand{\Lbar}{{\underline{L}}}

\newcommand{\Lbars}{{\Lbar}'}

\newcommand{\Lib}{L}
\newcommand{\Libbar}{{\underline{L}}}

\newcommand{\Ls}{{L'}}

\newcommand{\Osize}{{\cal O}}

\newcommand{\Riem}{{\bf R}}

\newcommand{\Ric}{{\rm Ric}}

\newcommand{\chibar}{{\underline\chi}}

\newcommand{\chihati}{\hat{\chi}{}^0}

\newcommand{\chihatsi}{\hat{\chi}{}'{}^0}
\newcommand{\chisi}{\chi'{}^0}

\newcommand{\conffact}{\Omega}

\newcommand{\cscal}{\delta} 

\newcommand{\up}[1]{\overline{#1}}
\newcommand{\cscalup}{\up{\cscal}}

\newcommand{\scale}[1]{{#1'}}
\newcommand{\curangulvars}{\scale{\curangulvar}}

\newcommand{\final}[1]{#1_0}
\newcommand{\curubarf}{\final{\curubar}}

\newcommand{\curubars}{{\scale{\curubar}}}

\newcommand{\curubarsff}{\curubars_1} 

\newcommand{\curuf}{\final{\curu}}

\newcommand{\curus}{{\scale{\curu}}}

\newcommand{\curusff}{\curus_1} 

\newcommand{\datvar}{{\bf x}}

\newcommand{\datvari}{\datvar^0}

\newcommand{\datvarii}{{\datvari_0}}

\newcommand{\ddatvar}{\delta\datvar_0}

\newcommand{\dscal}{\epsilon} 

\newcommand{\dscalup}{\up{\dscal}}

\newcommand{\eframe}{e}

\newcommand{\eframei}{e^0}

\newcommand{\framea}{A}

\newcommand{\frameb}{B}

\newcommand{\framec}{C}

\newcommand{\framed}{D}

\newcommand{\frameinda}{\framea}

\newcommand{\frameindb}{\frameb}

\newcommand{\freg}{\Omega}

\newcommand{\init}{H}

\newcommand{\initin}{\init_0}

\newcommand{\initmet}{h}

\newcommand{\initout}{\init_1}

\newcommand{\interval}{(-a, a)}
\newcommand{\intervalp}{(-a', a')}

\newcommand{\metd}{t}

\newcommand{\meteg}{g_0}

\newcommand{\metg}{g}

\newcommand{\metgs}{\scale{g}}

\newcommand{\multia}{A}

\newcommand{\multiap}{{A'}}

\newcommand{\multiapp}{{A''}}

\newcommand{\multib}{B}

\newcommand{\multindset}{{\cal J}}

\newcommand{\parvar}{{\bf a}} 

\newcommand{\parvarnbd}{{\cal A}}

\newcommand{\phis}{\scale{\phi}}

\newcommand{\sca}{\alpha}

\newcommand{\scb}{\beta}

\newcommand{\sff}{\chi}
\newcommand{\sffbar}{\underline{\chi}}

\newcommand{\soldom}{X}

\newcommand{\solvreg}{\soldom} 

\newcommand{\sscal}{\eta} 

\newcommand{\sscalup}{\up{\sscal}}

\newcommand{\tframe}{\partial}

\newcommand{\thetas}{{\scale{\theta}}}

\newcommand{\chihats}{{\chihat{}'}}

\newcommand{\eframeb}{\frameb} 

\newcommand{\eframec}{\framec} 

\newcommand{\eframed}{\framed} 
\newcommand{\chihatgo}{{}^g\chihat}
\newcommand{\geodmap}{\Phi}
\newcommand{\Laff}{\lambda}
\newcommand{\curmaxLaff}{\Laff_0}
\newcommand{\id}{\textbf{1}}
\newcommand{\cssscoeff}{C} 
\newcommand{\shearboundsconst}{c} 
\newcommand{\Lafftr}{\Laff^*}
\newcommand{\curustr}{\curus^*}
\newcommand{\curubarstr}{\curubars^*}
\begin{document}
\maketitle
\begin{abstract}
Work of Christodoulou, Klainerman, Rodnianski, Luk, An, and others has provided a number of results on the dynamical formation of trapped surfaces in vacuum solutions to the Einstein field equations. Since the stability of Minkowski spacetime as proved by Christodoulou and Klainerman implies that `small' initial data to the Einstein vacuum equations must give rise to a solution with no singularities, and hence no trapped surfaces, it has been assumed that dynamical formation of trapped surfaces requires a (hard) {\it large-data\/} existence result for the nonlinear hyperbolic Einstein field equations. In this paper we show, to the contrary, that dynamical trapped surface formation results qualitatively similar to those of Christodoulou and Klainerman-Rodnianski can be obtained from (classical) {\it local}, {\it small-data\/} existence results via a scaling we term a {\it radial boost}. In the process we also fully elucidate the {\it short-pulse\/} ansatz as a geometric optics ansatz by showing that, in this scaled picture, the so-called incoming shear satisfies, at highest order, a {\it linear\/} wave equation.
\end{abstract}
\section{Introduction}
In 1916, a year after Eintstein published his field equations, Schwarzschild \cite{schwarzschild} gave his celebrated solution for spherically symmetric spacetimes. This solution featured a curvature singularity. Work of Oppenheimer and Schneider \cite{oppenheimer} showed that, at least in spherical symmetry and in the presence of a collapsing cloud of dust, such a singularity could develop dynamically. Further work of Penrose \cite{penrose} (see also \cite{hawkingEllis}) showed that (in vacuum, or under reasonable conditions on the matter distribution) `singularities', in the form of incomplete causal geodesics, must exist whenever a solution to the Einstein equations possesses a so-called {\it trapped surface}, which is a smooth closed surface such that both ingoing and outgoing congruences of null geodesics converge towards the future. In 2009, Christodoulou \cite{christodoulou} showed that such trapped surfaces could form dynamically in solutions to the Einstein vacuum equations. The bulk of his work was taken up by a proof of a large-data existence result for the Einstein vacuum equations for a specific class of characteristic initial data satisfying a so-called {\it short-pulse\/} ansatz. Subsequently, Klainerman and Rodnianski \cite{klainrodi}, An and Luk \cite{anluk}, and An \cite{an} introduced various scaled and weighted norms in terms of which the proof simplified considerably. However, the question of obtaining an {\it explanation\/} for the success of the short-pulse ansatz, in the sense of finding a setting in which the PDE existence result becomes `expected' in some sense, has remained unexplored.
\par
In this paper we shall take up this question and show, contrary to received wisdom, that in fact `large-data' existence results qualitatively similar to those of Christodoulou and Klainerman-Rodnianski can be obtained by applying a simple scaling, which we term a {\it radial boost}, to solutions obtained from the classical {\it local}, {\it small-data\/} existence result of Rendall \cite{rendall}. We provide a toy model of a spherically symmetric (non-vacuum) spacetime possessing trapped surfaces to illustrate why this construction is reasonable. We shall also discover that the so-called {\it incoming shear tensor}, to which the short-pulse ansatz applies, satisfies, in the scaled picture and up to ignorable small terms, a {\it linear\/} wave equation, and that the short-pulse initial data of Christodoulou and Klainerman-Rodnianski can be viewed as geometric optics initial data for this wave equation.
\par
In the next section we shall give some elementary background in related results in the literature in order to orient the following discussion, and construct a toy model to give some intuition into our main result. In Section 3 we then state and prove our main result, and in Section 4 we give more discussion relating our work to results in the literature. 
\section{Background and motivation} In \cite{christodoulou}, Christodoulou proved essentially the following result. (The full version involves Sobolev space estimates; since one of the main points of the current paper is that one can construct and study these solutions {\it without\/} any knowledge of Sobolev spaces, we do not give these details here.)
\begin{theorem}
Consider characteristic initial data for the Einstein vacuum equations consisting of Minkowski data followed by a `short pulse' of width $\delta$ of incoming gravitational energy. For suitable choices of the incoming energy, such that the incoming gravitational energy is sufficiently large in every direction, the corresponding solution to the Einstein vacuum equations will contain a trapped surface.
\end{theorem}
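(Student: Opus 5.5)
The plan is to follow the strategy announced in the introduction: obtain the solution by rescaling a solution produced by Rendall's local, small-data characteristic existence theorem, rather than proving a large-data estimate directly. Fix a sphere of radius $\currz$ in Minkowski space and null coordinates $\curu,\curubar$ adapted to it, with the incoming short pulse supported in $0\le\curubar\le\cscal$ on the outgoing initial cone and Minkowski data on the incoming cone. I will use a \emph{radial boost}: rescale $\curubar=\cscal\,\curubars$ and $\curu=\curus/\cscal$ (up to suitable normalisation of $r$), so that the pulse region becomes a slab of unit width in $\curubars$. Angular variables are left unchanged. The aim is to write the Einstein vacuum equations, in double null gauge, for the rescaled quantities $(\metgs,\chihats,\dots)$ and check that Christodoulou's data, where $\chihatz\sim\cscal^{-1/2}$, become data of size $O(1)$ that differ from Minkowski by a controlled amount. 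In the scaled picture the relevant region should then be a region of size $O(1)$.

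The key steps are as follows.

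\begin{enumerate}
\item Compute the null structure equations and Bianchi equations in the boosted frame $\Ls=\cscal L$, $\Lbars=\cscal^{-1}\Lbar$. Identify which connection coefficients are large and which carry positive powers of $\cscal$.
\item Show that, to leading order in $\cscal$, the equations decouple. The incoming shear satisfies a linear wave (transport) equation along $\Lbars$, and the remaining quantities stay close to their Minkowski values; the errors are $O(\cscal^{1/2})$.
\item Apply Rendall's local existence theorem to the rescaled characteristic problem. The rescaled data lie in a $\cscal$-dependent but uniformly bounded family, and the solution depends smoothly on parameters, so a single existence domain independent of $\cscal$ is obtained. Undoing the scaling then gives a solution in a domain reaching $\curu$ of order $\currz$, as required.
\item Integrate Raychaudhuri's equation along $L$:
\[
L\,\tr\chi=-\tfrac12(\tr\chi)^2-|\chihat|^2 .
\]
In the region $\curubar\le\cscal$, the integrated incoming energy $\int_0^\cscal|\chihat|^2\,d\curubar$ is comparable to the data. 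Combined with the transport of $|\chihat|^2$ along $\Lbar$, which to leading order is given by the linear wave equation, this forces $\tr\chi<0$ on a sphere $\curubar=\cscal$, $\curu=\curu_0$, provided the data energy exceeds a threshold in every direction.
\end{enumerate}

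The main obstacle is step 3, namely showing that the existence interval is uniform in $\cscal$. The rescaled equations contain terms with negative powers of $\cscal$. I would first try to show that, in the boosted frame and with Christodoulou's scaling of the data, every such term multiplies a quantity that vanishes to the corresponding order. This would make the system a regular perturbation of the limiting linear system at $\cscal=0$, and local existence near the limit would then yield uniform existence by continuity in the parameter $\cscal^{1/2}$. If this fails, I would introduce weighted variables, in the spirit of Klainerman--Rodnianski's signature counting, to absorb the singular powers before invoking Rendall's theorem.
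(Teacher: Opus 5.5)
\textbf{Verdict: there is a genuine gap in step 3, caused by your choice of scaling, and it cannot be closed as proposed.}

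Your map $\curubars=\cscal^{-1}\curubar$, $\curus=\cscal\curu$, with the angles untouched, is a pure Lorentz boost. Under $L\mapsto\cscal L$, $\Lbar\mapsto\cscal^{-1}\Lbar$ the product $\tr\chi\,\tr\chibar$ is invariant, and on the Minkowski background (with $L=\partial_{\curubar}$, $\Lbar=\partial_{\curu}$) it equals $-2/r^2$. Concretely,
\[
r=\currz+2^{-1/2}\bigl(\cscal\curubars-\curus/\cscal\bigr),\qquad \tr{\chi'}=\frac{\sqrt{2}\,\cscal}{r},\qquad \tr{\chibar'}=-\frac{\sqrt{2}}{\cscal\,r}.
\]
So the rescaled Minkowski data on $\init_0$ already have $\curus$-derivatives of size $\cscal^{-1}$ and no limit as $\cscal\to0$. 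No constant normalisation of $r$ helps, because $\partial_{\curus}\log r=-2^{-1/2}/(\cscal r)$ is unchanged by it.

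Consequently there is no nondegenerate problem at $\cscal=0$ to perturb from, and the smooth-parameter part of Theorem \ref{rendallexistence} gives nothing uniform. Your region bookkeeping shows the same defect:
\begin{itemize}
\item $0\le\curu\lesssim\currz$ corresponds to a slab $0\le\curus\lesssim\cscal\currz$ of vanishing width;
\item $\curus=O(1)$ would put $\curu\sim\cscal^{-1}$, far past the centre.
\end{itemize}
Since a boost is an isometry, on the region you actually need your rescaled problem is just the original large-data problem in different units. (Also, $\cscal\chihat\sim\cscal^{1/2}$ is small, not $O(1)$. That smallness is what you want, since $|\chihat'|^2$ must compete with $\tr{\chi'}\sim\cscal$.)

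Your first fallback fails already at the level of the background. For example, $\tfrac12\tr{\chibar'}\,\chihat'$ in the $\Lbars$-transport of the shear has size $\cscal^{-1/2}$, and nothing compensates it. Your second fallback is exactly the weighted large-data machinery the radial boost is meant to replace, and you do not supply it.

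\textbf{The missing idea is the paper's scaling.} Two requirements essentially force it:
\begin{itemize}
\item the pulse has unit width, so $\curubars=\cscal^{-1}\curubar$;
\item the $\curu$-range needed for trapping stays of unit size, so $\curus=\curu$.
\end{itemize}
Keeping the null part equal to $-2\,d\curus\,d\curubars$ then requires an overall constant conformal factor $\cscal^{-1}$, which is harmless for the vacuum equations. This factor breaks the boost invariance: $\tr{\chi'}\,\tr{\chibar'}$ becomes $O(\cscal)$, with $\tr{\chibar'}$ bounded.

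The same factor dilates the sphere to radius $\cscal^{-1/2}\currz$, so the angles must also be rescaled, $\curangulvars=\cscal^{-1/2}\curangulvar$. This is done in adapted coordinates centred on the equator, so the angular factor is $\cos^2(\cscal^{1/2}\thetas)\to1$ rather than $\sin^2(\cscal^{1/2}\thetas)\to0$. The sphere is then covered by patches of original angular size $\cscal^{1/2}$, glued by uniqueness of the harmonic null coordinates.

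With this scaling:
\begin{itemize}
\item Minkowski becomes (\ref{modscaletap}), smooth in $\cscal^{1/2}$, with smooth, nondegenerate, flat limit $\etaib$.
\item Data satisfying (\ref{chihaticondb}) give $\chihatsi=\cscal\chihati=\Osize(\sscal)$.
\item Rendall's theorem with smooth parameter dependence directly yields a domain independent of $(\cscal,\sscal,\dscal)$, with no negative powers of $\cscal$ anywhere.
\end{itemize}

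After this fix, your steps 2 and 4 go through essentially as in the paper:
\begin{itemize}
\item With $\cscal=\cssscoeff\sscal^2$, a weighted $\chihats$ satisfies a flat linear wave equation up to $\Osize(\sscal^2)$.
\item The angular-derivative hypothesis (\ref{angdiffbounds}) reduces this to $\chihats=\frac{\currz}{\currz-\curus}\chihatsi+\Osize(\dscal\sscal)$.
\item Raychaudhuri plus the two-sided energy condition (\ref{shearbounds}) gives a trapped sphere while keeping $\initout$ untrapped.
\end{itemize}
What this proves is the paper's qualitatively similar theorem under those hypotheses, not Christodoulou's result with his exact data and $\curu$-range.
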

In their extension of Christodoulou's work, Klainerman and Rodnianski \cite{klainrodi}\ improved this by allowing for a more general class of initial data, including data with large angular derivatives. Subsequent work of An and Luk \cite{anluk} and An \cite{an}\ added decay ans\"atze to the scaling ans\"atze of \cite{christodoulou} and \cite{klainrodi}. These latter results also provided for trapped surface formation from much smaller initial data, but at the requirement of much greater angular smoothness.
\par
All of the foregoing results require sophisticated energy estimates, based on the so-called {\it null structure equations}, to prove large-data existence results.\footnote{By `large-data' we mean here data which is not close to Minkowski data in the chosen gauge. While the results of \cite{christodoulou} and \cite{klainrodi} are most clearly large-data in this sense, the results in \cite{anluk} {\it become\/} so (though they are not initially) because of the region over which they solve. We thank Peng Zhao for bringing this point to our attention.} More precisely, it can be shown that the connection coefficients and various null curvature components of a Lorentzian metric satisfying the Einstein vacuum equations must satisfy a coupled system of transport and constraint equations\footnote{These were apparently first derived in unpublished notes of Christodoulou.}; see Klainerman and Nicol\'o \cite{klainnicolo}\ for a sketch of the derivation and Christodoulou \cite{christodoulou}\ for a detailed treatment. These equations, however, are not solved directly but rather used in concert with local existence results of Rendall \cite{rendall} and Choquet-Bruhat (beginning with, e.g., \cite{choquetbruhat}) to prove existence via a bootstrap argument. More precisely, Christodoulou \cite{christodoulou} shows existence of the desired solution in an initial domain by an appeal to Rendall \cite{rendall}, and then shows local continuation, given the above-mentioned energy estimates, via Choquet-Bruhat \cite{choquetbruhat}. The above-cited papers work within the overall context established by \cite{christodoulou} (in particular, implicitly relying on \cite{rendall} and \cite{choquetbruhat}) and focus only on refining the necessary energy estimates. In other words, all of the works cited above rely implicitly on (a) the derivation of the null structure equations and (b) the local existence and continuation arguments from \cite{christodoulou}.
\par
In Christodoulou's original result, the initial incoming shear was explicitly required to satisfy an ansatz of the form
\begin{equation}
\chihatz = \delta^{-1/2} f(\delta^{-1} \chrubar, \chrangulvar),\label{chrchihatzansatz}
\end{equation}
where $\chrubar$ is an affine parameter along the null geodesics ruling the outgoing initial null hypersurface, and $\chrangulvar \in \Sph^2$. Klainerman and Rodnianski \cite{klainrodi} observe that the following extension of this ansatz is natural\footnote{\cite{klainrodi} relate this to the parabolic scaling of the Minkowski wave equation, though given the angular character of $\krangulvar$ this analogy seems somewhat strained in their setting.}:
\begin{equation}
\chihatz = \delta^{-1/2} f(\delta^{-1} \krubar, \delta^{-1/2} \krangulvar),\label{krchihatzansatz}
\end{equation}
but since $\krangulvar \in \Sph^2$ it is unclear how to apply this ansatz directly. They therefore settle for norm ans\"atze inspired by, but weaker than, those that would obtain given initial data of the form (\ref{krchihatzansatz}).
\par
A scaling similar to that in (\ref{krchihatzansatz}) was independently rediscovered much later, in a different context, by the present author in his doctoral thesis \cite{ntcthesis} (see also \cite{spntcarXiv}). In \cite{ntcthesis} highly concentrated solutions to the Einstein vacuum equations were sought under the additional assumption of $U(1)$ (translational) symmetry. As is well-known, in this setting, the Einstein vacuum equations reduce to a system of Riccati ODE equations coupled to a linear wave equation. In \cite{ntcthesis} the geometry of the spatial cross-section was taken to be {\it rectangular} (in that case, the entire real line $\R^1$) and it was observed that an ansatz which would correspond roughly to\footnote{\cite{ntcthesis} worked with the metric components directly, and the initial data was given in terms of the metric, not the shear. Further confounding comparisons, \cite{ntcthesis} worked with a {\it large\/} parameter $k$ instead of a {\it small\/} parameter $\delta$.}
\begin{equation}
\chihatz = \delta^{-1/2} f(\delta^{-1} \ntcubar, \delta^{-1/2} \ntcx)\label{ntcchihatzansatz}
\end{equation}
transformed this system to one in which all nonlinear terms in the wave equation were {\it small}. More precisely, in \cite{ntcthesis} a {\it null geodesic gauge} $\ntcu$, $\ntcubar$, $\ntcx \in \R^1$ was used, as opposed to the double-null gauge of previous workers, and a {\it coordinate transformation}
\begin{equation}
\ntcu \mapsto \ntcu,\,\ntcubar \mapsto \delta^{-1} \ntcubar,\, \ntcx \mapsto \delta^{-1/2}\ntcx\label{ntcscal}
\end{equation}
was applied; in the new coordinates, all nonlinear terms in the wave equation can be shown to be small.
\par
We note that (\ref{ntcscal}) is a Lorentz boost with parameter $\delta^{-1/2}$, i.e., $\ntcu \mapsto \delta^{1/2} \ntcu$, $\ntcubar \mapsto \delta^{-1/2} \ntcubar$, followed by an isotropic scaling by $\delta^{-1/2}$.
\par
It is not hard to see that a scaling like that in (\ref{ntcscal}), but without the transverse spatial scaling, would reduce the {\it pointwise\/} size of the incoming shear in (\ref{chrchihatzansatz}) from $\sim \delta^{-1/2}$ to $\sim 1$.\footnote{This was, in particular, clear to the present author as far back as 2021.} On the other hand, the detailed use of the coordinate scaling in \cite{ntcthesis} relies heavily on an additional conformal rescaling (by $\delta^{-1}$) of the Minkowski metric and the concomitant transverse spatial scaling, and makes implicit use of the form of the Minkowski metric in rectangular coordinates. In greater detail, under (\ref{ntcscal}) and an overall conformal scaling by $\delta^{-1}$, the Minkowski metric $\eta = -d\ntcu \otimes d\ntcubar - d\ntcubar \otimes d\ntcu + d\ntcx\otimes d\ntcx$ transforms to
\begin{equation}
\eta' = \delta^{-1} \left[ -d\ntcu \otimes d(\delta \ntcubar) - d(\delta \ntcubar) \otimes d\ntcu + d(\delta^{1/2} \ntcx) \otimes d(\delta^{1/2} \ntcx)\right] = \eta,
\end{equation}
while if we were to implement on the Minkowski metric in (say) the form
\begin{equation}
\eta = -d\curu \otimes d\curubar - d\curubar \otimes d\curu + [ \currz + 2^{-1/2} (\curubar - \curu)]^2 (d\curtheta \otimes d\curtheta + \sin^2(\curtheta) d\curphi \otimes d\curphi)\label{initetasphform}
\end{equation}
the scaling (cf.\ (\ref{krchihatzansatz}))
\begin{equation}
\curu \mapsto \curu,\,\curubar \mapsto \delta^{-1}\curubar, \curangulvar \mapsto \delta^{-1/2} \curangulvar,\label{curscal}
\end{equation}
$\curangulvar = (\curtheta, \curphi)$, together with an overall conformal scaling by $\delta^{-1}$, we would obtain instead the metric
\begin{equation}
\etap 
= -d\curu \otimes d\curubar - d\curubar \otimes d\curu\\
{}+ [\currz + 2^{-1/2} (\delta\curubar - \curu)]^2 (d\curtheta \otimes d\curtheta + \sin^2 (\delta^{1/2} \curtheta) d\curphi \otimes d\curphi),\label{scalinitetasphform}
\end{equation}
which is not at all the same form as (\ref{initetasphform}) and, worse, becomes degenerate as $\delta \rightarrow 0$. It is thus unclear how to extend the coordinate scaling to the spherical setting.
\par
Further complicating matters is the result of Luk and Moschidis \cite{lukmoschidis} which shows that hypersurfaces in a null vacuum spacetime which are `sufficiently close' to any spacelike hypersurface in Minkowski spacetime cannot contain trapped surfaces. Since the scaling in (\ref{ntcscal}), together with an overall conformal scaling (which does not change any causality properties), preserves the Minkowski metric, this would suggest that no spacelike hypersurface in solutions obtained via (\ref{ntcscal}) can contain a trapped surface. One might expect that a similar problem would arise when attempting to study trapped surface formation via the scaling (\ref{curscal}).
\par
On the other hand, it is clear that if we replace $\theta$ in (\ref{initetasphform}) -- (\ref{scalinitetasphform}) by $\theta - \pi/2$ (i.e., measure our angles from the equator rather than the poles), then the scaled metric in (\ref{scalinitetasphform}) is instead
\begin{equation}
\etap = -d\curu \otimes d\curubar - d\curubar \otimes d\curu + [\currz + 2^{-1/2} (\delta\curubar - \curu)]^2 (d\curtheta \otimes d\curtheta + \cos^2 (\delta^{1/2} \curtheta) d\curphi \otimes d\curphi),\label{modscalinitetasphform}
\end{equation}
which is easily seen to be an $O(\delta)$ perturbation of the (flat) metric obtained by taking the limit $\delta \rightarrow 0$, and which we term the {\it infinitely radially boosted Minkowski metric}:
\begin{equation}\label{infboostmink}
\etazp = -d\curu \otimes d\curubar - d\curubar \otimes d\curu + [\currz - 2^{-1/2} \curu]^2 (d\curtheta \otimes d\curtheta + d\curphi \otimes d\curphi).
\end{equation}
We note that the outgoing null expansion of any symmetry sphere of $\etazp$ {\it vanishes}, suggesting that $\etazp$ is {\it unstable\/} (in the full space of all Lorentzian metrics, which may or may not satisfy the Einstein vacuum equations) against the formation of trapped surfaces. While we shall not study the geometry of $\etazp$ in detail, we do note that it possesses a codimension-1 singular hypersurface at $\curu = 2^{1/2} \currz$, and that all timelike future-directed geodesics must intersect this hypersurface.
\par
Here and below we are being intentionally vague about the meaning of terms such as `small perturbation'. In particular, the way we use this term is generally {\it coordinate-dependent}. See Section \ref{discussion} for further discussion of this point.
\par
\par
Motivated by this last observation, we construct toy models of metrics which possess closed trapped surfaces despite being, in the neighborhood of at least one trapped surface, an $O(\delta)$ perturbation of (\ref{modscalinitetasphform}). Since these models are used only for motivation we will not try to make our considerations entirely precise.
We first recall the causal structure of Schwarzschild spacetime; see, e.g., \cite{hawkingEllis}, Figure 23, and surrounding discussion. One way of explaining the fact that any future-directed causal geodesic crossing the event horizon must ultimately reach the singularity at $r = 0$ is to observe that, when crossing the event horizon, the future-directed null cones `tip over' sufficiently that all future-directed causal vectors point towards decreasing $r$. Of course, in the Schwarzschild solution the required amount of `tipping' is sufficient that the Schwarzschild metric in the above coordinates is by no means a `small' perturbation of Minkowski.
On the other hand, defining new `radially boosted' coordinates (see (\ref{curscal}))
\begin{equation}
\curup = \curu,\,\curubarp = \delta^{-1}\curubar, \curangulvarp = \delta^{-1/2} \curangulvar,\label{curscaleq}
\end{equation}
and concomitant time and radial coordinates
\begin{equation}
\curt = 2^{-1/2} (\curubar + \curu),\,\curdr = 2^{-1/2} (\curubar - \curu),\,\curtp = 2^{-1/2} (\curubarp + \curup),\,\curdrp = 2^{-1/2} (\curubarp - \curup),
\end{equation}
a spacetime diagram of a $\curu\curubar$ cross-section, in $\curup\curubarp$ coordinates, will be roughly as in Figure \ref{radialboostfig}. There the blue lines repesent null cones in Minkowski spacetime and the red lines represent a slightly `tilted' version. It is thus clear graphically that, in the $\curup\curubarp$ coordinates, the null cones need only `tilt' an amount $O(\delta)$ for all future-directed causal vectors to point towards decreasing $r$.
\par
\begin{figure}
\centering
\includegraphics[keepaspectratio]{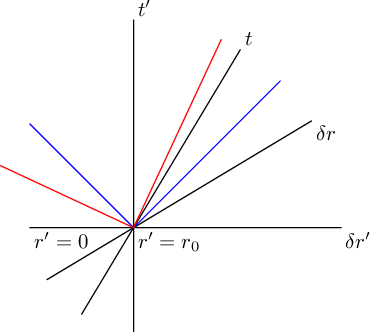}
\caption{Radially boosted Minkowski spacetime}\label{radialboostfig}
\end{figure}
\par
This may all be made precise as follows. We define a new metric $\etarot$, the angular part of which equals the Minkowski metric, but has a transverse part 
\begin{equation}
- d\cururot \otimes d\curubarrot - d\curubarrot \otimes d\cururot,
\end{equation}
where
\begin{equation}
\cururot = \cos\rotangle \curup - \sin\rotangle \curubarp,\,\curubarrot = \sin\rotangle \curup + \cos\rotangle \curubarp.
\end{equation}
The metric $\etarot$ is thus given by
\begin{align}
\etarot &= - d\cururot \otimes d\curubarrot - d\curubarrot \otimes d\cururot\notag\\
{}&\qquad + \left[ \currz + 2^{-1/2} (\delta \curubarp - \curup) \right]^2 (d\curthetap \otimes d\curthetap + \cos^2 (\delta^{1/2} \curthetap) d\curphip \otimes d\curphip)\\
&= -d\cururot \otimes d\curubarrot - d\curubarrot \otimes d\cururot\notag\\
{}&\qquad + \left[ \currz + 2^{-1/2} \left\{-\delta \sin\rotangle - \cos\rotangle) \cururot + (\delta \cos\rotangle - \sin\rotangle) \curubarrot\right\}\right]^2\notag\\
{}&\qquad\quad \cdot (d\curthetap \otimes d\curthetap + \cos^2 (\delta^{1/2} \curthetap) d\curphip \otimes d\curphip).
\end{align}
Defining $\Delta = \vert\det\,\etarot\vert$, the incoming and outgoing null expansions of this metric at a point $(\cururot, \curubarrot)$ are thus (up to a numerical factor) 
\begin{equation}
\partial_\cururot \log\Delta = \Delta^{-1} (-\delta \sin\rotangle - \cos\rotangle),\,\partial_\curubarrot \log\Delta = \Delta^{-1} (\delta \cos\rotangle - \sin\rotangle).
\end{equation}
Thus the point $(\cururot, \curubarrot)$ will represent a trapped sphere for $\etarot$ when $\rotangle$ satisfies
\begin{equation}
\tan \rotangle > \delta.
\end{equation}
Since $\tan\rotangle \sim \delta$ gives, {\it in the scaled coordinate system}, $\etarot \sim \eta + O(\delta)$, this shows that, in an appropriate coordinate system, a `small' perturbation of the Minkowski metric can lead to a trapped surface.
\par
Despite its limitations, the foregoing toy model gives the following insights which will be sufficient to allow us to overcome its limitations:
\begin{enumerate}
\item The finitely radially boosted Minkowski metric $\etap$ is an $O(\delta)$ perturbation of the infinitely radially boosted Minkowski metric $\etaib$.
\item Trapped surfaces can appear after a further $O(\delta)$ perturbation of $\etap$.
\end{enumerate}
Our results below will give trapped surface formation through the mechanism discovered by Christodoulou, which uses the Raychaudhuri equation (\cite{hawkingEllis})
\begin{equation}
\partial_\curubar \tr\,\schi = -\frac{1}{2} (\tr\schi)^2 - \left\vert\chihat\right\vert^2
\end{equation}
together with specially chosen data for $\chihat$ which naturally concentrates along spheres on ingoing null hypersurfaces. By the foregoing, we expect to require $\left\vert\chihat\right\vert^2 \sim \delta$, and thus to require a shear $\chihat \sim \delta^{1/2}$, which after adjusting for the radial boost is precisely as in Christodoulou's results. See (\ref{chihaticondb}), (\ref{chihatsicondb}).
\par
We close this section by observing that one way of reconciling the solutions to be constructed below with the nonexistence result of Luk and Moschidis \cite{lukmoschidis} mentioned above is to note that the `closeness' of the given hypersurface to a spacelike hypersurface in Minkowski space required by \cite{lukmoschidis} depends on how `close to null' the given hypersurface is. In our setting, because of the large radial boost, the natural spacelike sections $\curtp = \textrm{constant}$, in which our trapped surfaces are located, are very close to null (and in fact become null in the limit $\delta \rightarrow 0$).
\par
\section{Main result} We shall now show that the informal considerations in the previous section can be carried through to produce solutions to the Einstein vacuum equations exhibiting dynamic formation of trapped surfaces. Unlike most previous results in the literature, we do not rely on either the derivation of the null structure equations or the local existence/continuation results in \cite{christodoulou} but proceed from first principles, relying only on the local existence result in \cite{rendall} which is also at the foundation of the work in \cite{christodoulou}.
\par
In the study of partial differential equations, notions of `size' and `closeness' are typically formalized in terms of various Sobolev norms (and received wisdom holds that, in general, partial differential equations can {\it only\/} be solved in terms of $L^2$ Sobolev norms). In this paper we do not need to {\it solve\/} any equations but only {\it estimate\/} various quantities obtained from already-known solutions. Further, the only `closeness' or approximation results available from \cite{rendall} are in terms of smoothly parameterized families of functions. Thus the notion of `closeness' we shall use is that embodied in the following definition.
\par
\begin{definition}\label{Osizedef} Let $f : \freg_1 \times \freg_2 \subset \R^p \times \R^q \rightarrow \R^k$, $\freg_1$, $\freg_2$ open, $0 \in \freg_1$, be smooth. Let $\parvar$, $\datvar$ denote arbitrary elements of $\freg_1$, $\freg_2$. Let $\multia$ be a $p$-multiindex. Suppose that there is a smooth function $g : \freg_1 \times \freg_2 \rightarrow \R^k$ such that $f(\parvar, \datvar) = \parvar^\multia g(\parvar, \datvar)$. Then we write $f = \Osize(\parvar^\multia)$. If $f_1$, $f_2$ are such that $f_1 - f_2 = \Osize(\parvar^\multia)$, then we write $f_1 = f_2 + \Osize(\parvar^\multia)$. In cases where we wish to emphasize the coordinates on $\freg_2$ we shall write, e.g., $f = \Osize(\parvar^\multia)$.
\end{definition}
\par\noindent
As usual, to be precise we should really define $\Osize(\parvar^\multia)$ to be the set of all functions satisfying the above condition (or, more generally, with germ at $\parvar = 0$ satisfying the above condition), and write $f \in \Osize(\parvar^\multia)$, but we shall stick with the above notation for convenience. We note though that if $\multia < \multiap$, then $\Osize(\parvar^\multiap) = \Osize(\parvar^\multia)$, but $\Osize(\parvar^\multia) \neq \Osize(\parvar^\multiap)$. 
\par
We have the following results.
\par
\begin{proposition}\label{Osizechar} Let $\multia_i \neq 0$ be disjoint multiindices. Then $f = \sum_{i = 1}^k \Osize(\parvar^{\multia_i})$ if and only if the following holds: let $\{ \multib_i \}$ be any collection of multiindices satisfying $\multib_i < \multia_i$, and let $\multindset = \{ j : \multia_{ij} - \multib_{ij} > 0 \text{ for some } i \}$; then
\begin{equation}\label{Osizecharcond}
\left(\prod_{i = 1}^k \partial_\parvar^{\multib_i}\right) f|_{\{ \parvar : \parvar_j = 0 \text{ for all } j \in \multindset \}} = 0.
\end{equation}
\end{proposition}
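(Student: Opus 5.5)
The plan is to prove the two directions separately: the forward direction by a Leibniz-rule computation, the converse by induction on the total degree $\sum_i |\multia_i|$ using Hadamard's lemma. Throughout, I read ``disjoint'' as ``having disjoint supports''. I read $\multib_i < \multia_i$ as $\multib_i \le \multia_i$ componentwise with $\multib_i \neq \multia_i$.

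\textbf{Forward direction.} Suppose $f = \sum_i \parvar^{\multia_i} g_i$ and fix admissible $\multib_i$; put $\multib = \sum_i \multib_i$. By linearity it suffices to show that $\partial^\multib_\parvar(\parvar^{\multia_1} g_1)$ vanishes on $\{\parvar_j = 0,\ j \in \multindset\}$. Expand with the Leibniz rule into terms $c\,(\partial^{C}\parvar^{\multia_1})(\partial^{\multib - C} g_1)$ with $C \le \multib$.
\begin{itemize}
\item Since $\mathrm{supp}\,\multib_i \subset \mathrm{supp}\,\multia_i$ and the supports of the $\multia_i$ are disjoint, the components of $C$ on $\mathrm{supp}\,\multia_1$ are bounded by those of $\multib_1$.
\item Since $\multib_1 < \multia_1$, there is a $j$ with $\multia_{1j} > \multib_{1j} \ge C_j$, and this $j$ lies in $\multindset$.
\item Hence $\partial^C \parvar^{\multia_1}$ is either zero or a monomial containing a positive power of $\parvar_j$, so it vanishes on the given set.
\end{itemize}

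\textbf{Converse: setup and base case.} The main tool is Hadamard's lemma: if $h$ is smooth and $h|_{\parvar_j = 0} = 0$, then $h = \parvar_j \int_0^1 (\partial_j h)(\ldots, t\parvar_j, \ldots)\,dt$. This needs $\freg_1$ to be star-shaped in the $\parvar_j$ direction about $\{\parvar_j = 0\}$. I would either assume this (e.g.\ $\freg_1$ a box), or shrink $\freg_1$ and patch with a partition of unity, noting that away from $\{\parvar_j=0\}$ division by $\parvar_j$ is harmless. For the induction it is convenient to allow some $\multia_i = 0$. In that case $\Osize(\parvar^0)$ is the set of all smooth functions, so the statement is trivially true whenever some $\multia_i$ vanishes; this serves as the base case.

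\textbf{Converse: inductive step.} Choose $j \in \mathrm{supp}\,\multia_1$ and write $f = f|_{\parvar_j = 0} + \parvar_j f_1$, where $f_1 = \int_0^1 \partial_j f(\ldots, t\parvar_j, \ldots)\,dt$.
\begin{enumerate}
\item Let $\multia_1'$ be $\multia_1$ with its $j$-th entry set to zero. I claim $f_0 := f|_{\parvar_j = 0}$, which is independent of $\parvar_j$, satisfies (\ref{Osizecharcond}) for $(\multia_1', \multia_2, \ldots)$. Any $\multib_1' < \multia_1'$, extended by $\multib_{1j} = 0$, gives $\multib_1 < \multia_1$ with $j \in \multindset$. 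Derivatives of $f_0$ not involving $\parvar_j$ are restrictions of those of $f$ to $\parvar_j = 0$, so the vanishing sets match.
\item I claim $f_1$ satisfies the condition for $(\multia_1 - e_j, \multia_2, \ldots)$. For $\multib_1 < \multia_1 - e_j$ we have $\partial^{\multib}(\parvar_j f_1) = \parvar_j \partial^{\multib} f_1 + \multib_j \partial^{\multib - e_j} f_1$. Combining this with the hypothesis for $\multib + e_j$, and inducting on $\multib_j$, gives the required vanishing. Alternatively, differentiate under the integral: $\partial^\multib f_1 = \int_0^1 t^{\multib_j} (\partial^{\multib + e_j} f)(\ldots, t\parvar_j, \ldots)\,dt$. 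If $\multindset$ does not contain $j$, the scaled point still lies in the vanishing set, so the integrand vanishes there.
\end{enumerate}
The total degree drops in both pieces, so induction gives $f_0 \in \Osize(\parvar^{\multia_1'}) + \sum_{i \ge 2} \Osize(\parvar^{\multia_i})$ and $f_1 \in \Osize(\parvar^{\multia_1 - e_j}) + \sum_{i \ge 2} \Osize(\parvar^{\multia_i})$.

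\textbf{Main obstacle.} The decomposition above does not yet close, because $f_0$ only lands in $\Osize(\parvar^{\multia_1'})$ rather than $\Osize(\parvar^{\multia_1})$. I would repair this by iterating Taylor's formula in $\parvar_j$ to order $\multia_{1j}$: $f = \sum_{m < \multia_{1j}} \frac{\parvar_j^m}{m!}\, \partial_j^m f|_{\parvar_j = 0} + \parvar_j^{\multia_{1j}} R$. Each coefficient $\partial_j^m f|_{\parvar_j=0}$ satisfies the condition for $(\multia_1', \multia_2, \ldots)$ by the argument in step~1, with $\multib_{1j} = m$. The remainder $R$ satisfies the condition for $(\multia_1 - \multia_{1j} e_j, \multia_2, \ldots)$, by the integral-formula argument of step~2. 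Multiplying the coefficient terms by $\parvar_j^m$ and $R$ by $\parvar_j^{\multia_{1j}}$ then recombines everything into $\sum_i \Osize(\parvar^{\multia_i})$.

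The delicate point, which I expect to be the main obstacle, is the bookkeeping in these two claims. One must check that each admissible $\multib$ for the reduced tuple lifts to an admissible $\multib$ for the original tuple with the same or a smaller $\multindset$, so that the vanishing sets are compatible. It is precisely here that disjointness of the supports is used.
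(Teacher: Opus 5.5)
\textbf{The gap is in recombining the Taylor coefficients.} Your forward direction is correct. So is your handling of the Taylor remainder $R$: differentiating under the integral and using the hypothesis with $\multib_1=\multib_1'+\multia_{1j}e_j$ shows that $R$ satisfies the condition for $(\multia_1',\multia_2,\ldots,\multia_k)$, because that tuple's $\multindset$ omits $j$ and so its vanishing set is invariant under $\parvar_j\mapsto t\parvar_j$.

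Your Step~1 claim, extended to all $c_m:=\partial_j^m f|_{\parvar_j=0}$, is true but too weak. It only gives $c_m\in\Osize(\parvar^{\multia_1'})+\sum_{i\ge2}\Osize(\parvar^{\multia_i})$. For $m<\multia_{1j}$ the monomial $\parvar_j^m\parvar^{\multia_1'}$ is not divisible by $\parvar^{\multia_1}$, so $\parvar_j^m c_m$ need not lie in $\sum_i\Osize(\parvar^{\multia_i})$. The $m=0$ coefficient is $f|_{\parvar_j=0}$ itself, so iterating Taylor's formula has not touched the obstacle you identified; it only repaired the remainder.

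Already for $k=1$, $\multia_1=(1,1)$, $j=1$, your argument gives $f=f(0,\parvar_2)+\parvar_1R$ with $f(0,\parvar_2)$ and $R$ both in $\Osize(\parvar_2)$. That yields only $f\in\Osize(\parvar_2)$, whereas you need $f(0,\parvar_2)\equiv 0$. This is not a matter of lifting admissible indices either: the required information sits at $\multib_1=(0,1)$, which is not the lift of any admissible index for the reduced tuple.

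\textbf{What you need instead.} You must show that $c_m$ satisfies the condition for $(\multia_2,\ldots,\multia_k)$ alone, so that induction gives $c_m\in\sum_{i\ge2}\Osize(\parvar^{\multia_i})$. This is also necessary, since $\partial_j^m(\parvar^{\multia_1}g_1)|_{\parvar_j=0}=0$ for $m<\multia_{1j}$ and $j\notin\mathrm{supp}\,\multia_i$ for $i\ge2$. For $k=1$ it means $c_m\equiv 0$, so your base case must allow deleting a multiindex, not merely setting it to $0$.

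To prove it, use the hypothesis with $\multib_1=C+me_j$ for every $C\le\multia_1'$, including $C=\multia_1'$; this is admissible because the $j$-entry is still deficient.
\begin{enumerate}
\item Fix $\multib_i<\multia_i$ for $i\ge2$ and set $\multindset'=\bigcup_{i\ge2}\{l:\multib_{il}<\multia_{il}\}$. Let $h$ be $\partial^{\sum_{i\ge2}\multib_i}c_m$ restricted to $\{\parvar_l=0,\ l\in\multindset'\}$.
\item The hypothesis then gives $\partial^{\multia_1'}h\equiv0$, and $\partial^Ch=0$ on $\{\parvar_l=0:\ l\in\mathrm{supp}\,\multia_1',\ C_l<\multia_{1l}\}$ for every $C\le\multia_1'$.
\item Argue by downward induction on $|C|$. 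For each such $l$ you have $\partial^{C+e_l}h\equiv0$, so $\partial^Ch$ is independent of $\parvar_l$. Since $\partial^Ch$ also vanishes where all these $\parvar_l$ vanish, it vanishes identically.
\item Taking $C=0$ gives $h\equiv0$.
\end{enumerate}
In the example this is just $f(0,\parvar_2)=f(0,0)+\int_0^{\parvar_2}\partial_2f(0,s)\,ds=0$, using $\multib=(0,0)$ and $\multib=(0,1)$. With Step~1 replaced by this claim, your induction on total degree closes.

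\textbf{Domain assumptions.} This integration step, which is absent from your proposal, is where the domain genuinely matters, not only Hadamard's lemma. You need something like a box (or a germ statement), and patching with partitions of unity cannot help. For a U-shaped $\freg_1$ whose slice $\{\parvar_1=0\}$ is disconnected, $f=\psi(\parvar_2)$ with $\psi$ equal to $0$ near the lower piece of the slice and $1$ near the upper piece satisfies every condition but is not $\Osize(\parvar_1\parvar_2)$.

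The paper takes a different route: induction on $k$, splitting off $f-f|_{\{\parvar_j=0,\ j\in\multindset_1\}}$ by repeated Hadamard. It is terse at this point too, so it does not supply the missing step for you.
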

\begin{proof}
The forward direction is clear since the $\multia_i$ are disjoint. Now suppose that the condition above holds, let $\{ \multib_i \}$ be as stated, and set $\multindset_i = \{ j : \multia_{ij} - \multib_{ij} > 0 \}$. Assume without loss of generality that $\multindset_1 \neq \emptyset$. Let $f_1$ be the function obtained from $f$ by setting $\alpha_j = 0$ for all $j \in \multindset_1$. We claim that $f - f_1 = \Osize(\parvar^{\multia_1})$. This may be seen as follows. Let $j_0 = \min \multindset_1$. By (\ref{Osizecharcond}), $f - f_1 = \parvar_{j_0}^{A_{1j_0}} g_1$ for some smooth function $g_1$. By induction, we thus obtain $f - f_1 = \Osize(\parvar^{\multia_1})$. Now (\ref{Osizecharcond}) with $i$ running from 2 to $k$ clearly applies to $f_1$, and the result follows by a further induction.
\end{proof}
%
\par\noindent
Non-disjoint indices can be handled by noting that, e.g., if $\multia = \inf \{ \multia_1, \multia_2 \}$, then $f = \Osize(\parvar^{\multia_1}) + \Osize(\parvar^{\multia_2})$ if and only if $f = \Osize(\parvar^\multia) [\Osize(\parvar^{\multia_1 - \multia}) + \Osize(\parvar^{\multia_2 - \multia})]$, where $\multia_1 - \multia$ and $\multia_2 - \multia$ are disjoint.
\par
We also have the following two results.
\par
\begin{lemma}\label{Osizecompose} If $f = f_0 + \Osize(\parvar^\multia)$, and $F : \freg \subset \R^k \rightarrow \R^\ell$ is smooth on a region containing the range of $f$ and $f_0$, then $F(f(\parvar, \datvar)) - F(f_0(\parvar, \datvar)) = \Osize(\parvar^\multia)$. Moreover, for any derivative operator ${\cal D}$ in $\datvar$, ${\cal D} f = {\cal D} f_0 + \Osize(\parvar^\multia)$.
\end{lemma}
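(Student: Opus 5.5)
The plan is to prove the two claims directly from Definition \ref{Osizedef}, using the integral form of Taylor's theorem (Hadamard's lemma) for the first and the fact that $\datvar$-derivatives commute with multiplication by $\parvar^\multia$ for the second.

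For the first claim, write $f = f_0 + \parvar^\multia g$ with $g$ smooth on $\freg_1 \times \freg_2$. For each $(\parvar,\datvar)$ apply the fundamental theorem of calculus along the segment from $f_0$ to $f$:
\begin{equation*}
F(f(\parvar,\datvar)) - F(f_0(\parvar,\datvar)) = \left(\int_0^1 DF\big(f_0(\parvar,\datvar) + s\,\parvar^\multia g(\parvar,\datvar)\big)\,ds\right)\cdot \parvar^\multia g(\parvar,\datvar).
\end{equation*}
The right-hand side is $\parvar^\multia$ times
\begin{equation*}
G(\parvar,\datvar) = \int_0^1 DF\big(f_0 + s(f-f_0)\big)\,ds \cdot g,
\end{equation*}
and $G$ is smooth because we may differentiate under the integral sign. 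The integrand is a smooth function of $(s,\parvar,\datvar)$ on the compact $s$-interval $[0,1]$, so the integral inherits smoothness. This gives $F(f) - F(f_0) = \Osize(\parvar^\multia)$.

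The one point needing care, and the main obstacle, is that the segment $f_0 + s(f - f_0)$ must lie in the domain $\freg$ of $F$. The hypothesis only guarantees that $\freg$ contains the ranges of $f$ and $f_0$. I would handle this by interpreting the hypothesis as requiring $\freg$ to contain the convex hull of each pair $f_0(\parvar,\datvar)$, $f(\parvar,\datvar)$. Alternatively, if $\freg$ is convex (for instance $F$ defined on all of $\R^k$, which is the case in all later applications), the issue disappears. Otherwise one can shrink $\freg_1$ to a small neighbourhood of $\parvar = 0$, consistent with the germ interpretation remarked after the definition. There the segment is short, since $f - f_0 \to 0$ as $\parvar \to 0$, so it stays inside the open set $\freg$, at least locally uniformly in $\datvar$.

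For the second claim, ${\cal D}$ differentiates only in $\datvar$, and $\parvar^\multia$ does not depend on $\datvar$. Hence ${\cal D} f - {\cal D} f_0 = {\cal D}(\parvar^\multia g) = \parvar^\multia\, {\cal D} g$, and ${\cal D} g$ is smooth because $g$ is smooth. This is exactly ${\cal D} f = {\cal D} f_0 + \Osize(\parvar^\multia)$. If ${\cal D}$ has smooth coefficients depending on $(\parvar,\datvar)$, the same computation applies, since those coefficients are simply absorbed into the smooth factor.
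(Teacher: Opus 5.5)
Your proof is correct and follows the paper's argument: Hadamard's lemma gives $F(f)-F(f_0)=(f-f_0)\cdot(\text{smooth})=\parvar^\multia\cdot(\text{smooth})$. The second claim is immediate from the definition, since ${\cal D}$ passes through the $\datvar$-independent factor $\parvar^\multia$.

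The domain issue you flag needs none of your three workarounds. Let $W$ be the set of points $(\parvar,\datvar)$ whose segment from $f_0$ to $f$ lies in $\freg$. It is open, by compactness of $[0,1]$, and it contains $\{\parvar^\multia=0\}$, where $f=f_0$. Define $G$ by your integral on $W$ and by $(F(f)-F(f_0))/\parvar^\multia$ on $\{\parvar^\multia\neq 0\}$. These agree on the overlap, so they give a smooth $G$ on all of $\freg_1\times\freg_2$, with no extra hypotheses and no shrinking. A similar patching, this time with a partition of unity, is what makes the two-point Hadamard formula the paper quotes valid on $\freg\times\freg$ for arbitrary open $\freg$.
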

\begin{proof} The first part follows from Hadamard's lemma (see, e.g., \cite{bierstone}, Lemma 1.2): $F(f(\parvar, \datvar)) - F(f_0(\parvar, \datvar)) = (f - f_0) \cdot g(f, f_0)$. The second part follows directly from Definition \ref{Osizedef}.\end{proof}
\par
While the first part of Lemma \ref{Osizecompose} has an analogue in $L^2$ Sobolev space theory, the second most definitely does not. We note that it is this second part of Lemma \ref{Osizecompose} which makes the $\Osize$ notation truly useful for performing estimates. 
\par
\begin{lemma}\label{OsizeODE} Suppose that $F : \freg_1 \times \freg_2 \times \interval \subset \R^p \times \R^q \times \R^1 \rightarrow \R^q$ satisfies $F(\parvar, \datvar, s) = F^0(\datvar, s) + \Osize(\parvar^{\multiap})$, and let $\ddatvar : \freg_1 \rightarrow \R^q$. Then there are regions $\freg'_1 \subset \freg_1$, $\freg'_2 \subset \freg_2$, and an interval $\intervalp \subset \interval$ such that the following equations have unique solutions on $\intervalp$ for all $\parvar \in \freg'_1$, $\datvarii \in \freg'_2$:
\begin{align}
\dot{\datvari} = F^0(\datvari, s),&\quad \datvari(0) = \datvarii\\
\dot{\datvar} = F(\parvar, \datvar, s),&\quad \datvar(0) = \datvarii + \ddatvar(\parvar).\label{OsizeODEfsys}
\end{align}
Moreover, for $\datvar_{\parvar, \datvarii}(s)$ the solution to (\ref{OsizeODEfsys}), the map $\freg'_1 \times \freg'_2 \times \intervalp \rightarrow \R^q$, $(\parvar, \datvarii, s) \mapsto \datvar_{\parvar, \datvarii}(s)$ is smooth. Finally, if $\ddatvar = \Osize(\parvar^\multiapp)$ and $\multia = \inf \{ \multiap, \multiapp \}$, then $\datvar_{\parvar, \datvarii} = \datvari_\datvarii + \Osize(\parvar^\multia)$.
\end{lemma}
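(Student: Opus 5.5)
The plan is to treat $\parvar$ and the initial value $\datvarii$ as extra (frozen) state variables and reduce everything to the standard smooth-dependence theorem for ODEs. I assume, as the final claim implicitly requires, that $\ddatvar$ is smooth and that $F^0$ is smooth on a neighbourhood of $\{\parvar = 0\}$; by Definition \ref{Osizedef} we may write $F(\parvar,\datvar,s) = F^0(\datvar,s) + \parvar^{\multiap} G(\parvar,\datvar,s)$ with $G$ smooth, and we may shrink domains at will. I would also assume $\ddatvar(0)=0$, or absorb $\ddatvar(0)$ into $\datvarii$.

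\textbf{Existence and smoothness.} Consider on $\freg_1 \times \freg_2 \times \interval$ the autonomous system
\[
\dot\parvar = 0,\qquad \dot{\datvar} = F(\parvar, \datvar, s),\qquad \dot s = 1,
\]
with initial data $(\parvar, \datvarii + \ddatvar(\parvar), 0)$. This vector field is smooth, so Picard--Lindel\"of together with smooth dependence on initial conditions gives a local flow that is smooth in the initial point. Fix a base point $(0, \datvar^*_0)$. A uniform existence time $a'>0$ and neighbourhoods $\freg'_1 \ni 0$, $\freg'_2 \ni \datvar^*_0$ then exist on which all solutions stay inside a compact subset of the domain; this follows from the standard Lipschitz bound on a compact neighbourhood. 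The same argument applied to $F^0$ gives the solutions $\datvari$, and since these are the $\parvar=0$ specialisations, the same intervals work after possibly shrinking. Composing the smooth flow with the smooth initial-data map $(\parvar,\datvarii) \mapsto \datvarii + \ddatvar(\parvar)$ shows that $(\parvar,\datvarii,s) \mapsto \datvar_{\parvar,\datvarii}(s)$ is smooth, and uniqueness is part of Picard--Lindel\"of.

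\textbf{The $\Osize$ estimate.} Set $w = \datvar_{\parvar,\datvarii} - \datvari_{\datvarii}$, which is smooth in $(\parvar,\datvarii,s)$. By Hadamard's lemma (as used in Lemma \ref{Osizecompose}), $F^0(\datvar,s) - F^0(\datvari,s) = M(\parvar,\datvarii,s)\, w$ with $M$ a smooth matrix. Then $w$ satisfies the linear ODE
\[
\dot w = M w + \parvar^{\multiap} G(\parvar, \datvar, s),\qquad w(0) = \parvar^{\multiapp} h(\parvar),
\]
with $h$ smooth. Let $\Phi(\parvar,\datvarii,s)$ be the smooth fundamental matrix of $\dot y = My$. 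Duhamel gives
\[
w = \Phi(s)\Big[\parvar^{\multiapp} h(\parvar) + \parvar^{\multiap}\int_0^s \Phi(\sigma)^{-1} G\,d\sigma\Big],
\]
and the integral is smooth in all variables. Because $\multia = \inf\{\multiap,\multiapp\}$ is taken componentwise, both $\parvar^{\multiap}$ and $\parvar^{\multiapp}$ equal $\parvar^{\multia}$ times a monomial. Hence $w = \parvar^{\multia} g$ with $g$ smooth, which is exactly the statement $\datvar_{\parvar,\datvarii} = \datvari_{\datvarii} + \Osize(\parvar^\multia)$.

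\textbf{Main obstacle.} The step needing the most care is getting a factor that is genuinely smooth rather than merely bounded, since Definition \ref{Osizedef} requires smoothness of $g$. A Gronwall argument would only give size bounds. The linearisation via Hadamard's lemma plus Duhamel avoids this, because every object in the representation is smooth and the monomial factors out explicitly.
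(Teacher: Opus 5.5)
Your proof is correct, but it gets the $\Osize$ estimate by a different mechanism from the paper's. For existence and smooth dependence, both you and the paper appeal to standard ODE theory.

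For the estimate, the paper uses the derivative characterization. For $\multib < \multia$ it differentiates the equation and the initial condition $\multib$ times in $\parvar$. It then notes that the resulting forcing and data vanish on $\{\parvar_i = 0\}$ whenever $\multib_i < \multia_i$, and invokes Proposition \ref{Osizechar} to turn this vanishing into divisibility by $\parvar^\multia$.

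You instead linearize the difference $w$ with Hadamard's lemma and solve it by Duhamel's formula. This exhibits the smooth quotient explicitly and bypasses Proposition \ref{Osizechar} entirely. It buys you two things:
\begin{enumerate}
\item It handles explicitly the term $F^0(\datvar,s) - F^0(\datvari,s)$, which the paper's one-line sketch passes over. The paper's identity $\partial_s \partial_\parvar^\multib \datvar = \partial_\parvar^\multib \Osize(\parvar^\multiap)$ really concerns $\datvar - \datvari$, and needs an induction on $|\multib|$ to dispose of this homogeneous linear term.
\item Your formula gives the sharper conclusion $\datvar_{\parvar,\datvarii} = \datvari_\datvarii + \Osize(\parvar^\multiap) + \Osize(\parvar^\multiapp)$. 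This is the form actually needed where the paper applies the lemma at (\ref{trchisiOsizea}), since there the componentwise infimum of the two multiindices is $0$ and the stated conclusion would be vacuous.
\end{enumerate}
The paper's route is shorter once Proposition \ref{Osizechar} is available. It uses only smooth parameter dependence plus uniqueness for the differentiated problems, in keeping with the $\Osize$ calculus it uses throughout.

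Two minor points of care:
\begin{enumerate}
\item Hadamard's lemma needs the segment between $\datvari_\datvarii(s)$ and $\datvar_{\parvar,\datvarii}(s)$ to lie in the domain of $F^0$. You secure this by shrinking $\freg'_1$, using continuity of the flow on a compact set.
\item Your explicit assumption $\ddatvar(0)=0$ is the right one, and it holds automatically when $\multiapp \neq 0$. Absorbing $\ddatvar(0)$ into $\datvarii$ is not legitimate, since both problems are posed with the same $\datvarii$.
\end{enumerate}
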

\begin{proof} The existence and smoothness results follow from standard arguments (see, e.g., Sideris \cite{sideris}, Chapter 6). The estimate follows by noting that if $\multib < \multia$, then
\begin{equation}
\frac{\partial}{\partial s} \left(\partial_\parvar^\multib \datvar\right) = \partial_\parvar^\multib \Osize(\parvar^\multiap),\, (\partial_\parvar^\multib \datvar)(\parvar, 0) = \partial_\parvar^\multib \Osize(\parvar^\multiapp),
\end{equation}
both of which vanish at $\parvar_i = 0$ whenever $\multib_i < \multia_i$, and applying Proposition \ref{Osizechar}. \end{proof}
\par
We shall work with the same general geometric picture as Christodoulou \cite{christodoulou} and subsequent workers; thus we shall assume that our initial data is given on incoming and outgoing null hypersurfaces intersecting in a topological sphere. We shall moreover assume that the spacetime is Minkowskian everywhere to the past of the incoming null hypersurface, and assign nontrivial data only on the outgoing null hypersurface.
\par
The existence result of Rendall \cite{rendall}, which we discuss next, is based on a very similar geometric setup except that the intersection of the two hypersurfaces is initially assumed to be covered by a single coordinate patch. Moreover, the metric is constructed in so-called {\it harmonic\/} (or {\it wave\/}) coordinates, which satisfy
\begin{equation}
\metg^{ij} \nabla_i \nabla_j x^k = 0.
\end{equation}
Specifically, Rendall proves the following.
\par
\begin{theorem}\label{rendallexistence} Let $x^i$, $i = 0, 1, 2, 3$, be coordinates on $\R^4$, and let $\init_0 = \{ x^1 = 0 \}$, $\init_1 = \{ x^0 = 0 \}$. Suppose given on $\init_0 \cup \init_1$ a smooth, symmetric positive-definite matrix $\initmet_{\sca\scb}$, $\sca, \scb \in \{ 2, 3 \}$, and suppose given on $\initout \cap \initin$ smooth functions $\conffact$, $\conffact_0$, $\conffact_1$, $\metd_0$, $\metd_1$. Then there is a neighborhood $U$ of $\initout \cap \initin$ in the quadrant $\{ x^0 \geq 0, x^1 \geq 0 \}$ on which there is a unique smooth solution to the Einstein vacuum equations $\metg$ and a smooth function $\conffact$, agreeing with that given on $\init_0 \cap \init_1$, such that on $\init_0 \cup \init_1$, $\metg_{\sca\scb} = \Omega \initmet_{\sca\scb}$, while on $\init_0 \cap \init_1$ we have $\partial_0 \conffact = \conffact_0$, $\partial_1 \conffact = \conffact_1$, $\partial_0 \metg_{13} = \metd_0$, $\partial_0 \metg_{14} = \metd_1$.
\par
Furthermore, if the initial data $\initmet$, $\conffact$, $\conffact_i$, $\metd_i$ depend smoothly on a finite number of parameters, say $\parvar$, in a neighborhood of some $\parvar_0$, then the following holds: there is a neighborhood $\parvarnbd$ of $\parvar_0$ and a neighborhood $U'$ of $\init_0 \cap \init_1$ such that the foregoing holds on $U'$ for all $\parvar \in \parvarnbd$, and the solution $\metg$ and conformal factor $\conffact$ depend smoothly on $\parvar$. \end{theorem}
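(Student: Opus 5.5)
The plan follows Rendall's original strategy. The problem splits into three parts: (i) recover the full metric on $\init_0 \cup \init_1$ from the free data, (ii) solve a reduced hyperbolic system with that data, and (iii) show the solution is vacuum and depends smoothly on parameters.

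\textbf{Step 1: data on $\init_0 \cup \init_1$.} I work in harmonic coordinates. On each null hypersurface, the conformal class $\initmet_{\sca\scb}$ is prescribed, so the remaining metric components and the conformal factor $\conffact$ must be determined. Restricting the Einstein equations, together with the harmonic gauge condition $\metg^{ij}\nabla_i\nabla_j x^k = 0$, to $\init_0$ (respectively $\init_1$) gives a hierarchical system of ODEs along the null generators. The Raychaudhuri-type equation is a second-order ODE for $\conffact$ driven by the shear of $\initmet$. Linear first-order ODEs then give the shift-like components $\metg_{0\sca}$ (resp.\ $\metg_{1\sca}$) and the lapse-type component $\metg_{01}$; the remaining components are fixed by the gauge normalisation on a null hypersurface. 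The initial values of these ODEs on $\init_0 \cap \init_1$ are exactly $\conffact$, $\conffact_0$, $\conffact_1$, $\metd_0$, $\metd_1$. Standard ODE theory, in the form of Lemma \ref{OsizeODE} (existence, uniqueness, smooth dependence on parameters), produces smooth data on a neighbourhood of $\init_0 \cap \init_1$ in $\init_0 \cup \init_1$. The Riccati nonlinearity in $\conffact$ may blow up in finite affine time, which is one reason the statement is only local.

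\textbf{Step 2: the reduced equations.} In harmonic gauge the Einstein equations become the reduced system $\metg^{ij}\partial_i\partial_j \metg_{kl} = Q_{kl}(\metg,\partial \metg)$, a quasilinear diagonal wave system. I solve its characteristic initial value problem on the quadrant $\{x^0 \ge 0, x^1 \ge 0\}$ by Rendall's reduction to a Cauchy problem. First, the transverse derivatives of all orders of $\metg$ along $\init_0$ and $\init_1$ are computed from the equation. Using Borel's lemma, the data are then extended to a smooth function $\tilde \metg$ on a full neighbourhood, which satisfies the reduced equation to infinite order on $\init_0 \cup \init_1$. The difference $\metg - \tilde\metg$ solves a Cauchy-type problem with source vanishing to infinite order across the null hypersurfaces. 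It can be extended by zero to the past and solved by the standard local existence theorem for quasilinear symmetric hyperbolic systems, via energy estimates and Picard iteration. Finite speed of propagation gives uniqueness in the quadrant.

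\textbf{Step 3: vacuum, uniqueness of $\conffact$, and parameters.} To see that the reduced solution is vacuum, set $\Gamma^k = \metg^{ij}\nabla_i\nabla_j x^k$. The contracted Bianchi identity gives a homogeneous linear wave equation for $\Gamma^k$. Step 1 ensures that $\Gamma^k$ vanishes on $\init_0 \cup \init_1$, together with the compatible transverse derivative, since the constraints were imposed there. Uniqueness for the characteristic problem then gives $\Gamma^k \equiv 0$, so $\metg$ solves the Einstein vacuum equations. Uniqueness of the pair $(\metg, \conffact)$ follows from the uniqueness in Steps 1 and 2.

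For the parameter statement, the key point is that the energy estimates of Step 2 depend only on finitely many $C^k$ norms of the data and on the ellipticity and hyperbolicity constants. These are uniform for $\parvar$ in a small neighbourhood $\parvarnbd$ of $\parvar_0$, so the existence region $U'$ can be chosen independent of $\parvar$. Smoothness in $\parvar$ then follows by differentiating: difference quotients in $\parvar$ satisfy the linearised reduced system with data converging smoothly, so by induction every $\partial_\parvar^\multib \metg$ exists and is continuous.

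I expect the main obstacle to be Step 2, namely the reduction of the characteristic problem to a Cauchy problem while keeping uniform control of the existence domain in $\parvar$. The Borel extension must be chosen to depend smoothly on $\parvar$; I would do this with a fixed cutoff construction whose coefficients are the parameter-dependent transverse derivatives.
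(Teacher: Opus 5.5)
The paper gives no proof of this statement; it quotes it from Rendall. Your three steps (constraint ODEs along the generators, Borel extension plus reduction of the harmonic-gauge reduced system to a Cauchy problem, propagation of the gauge, uniform-in-$\parvar$ estimates) are Rendall's route, so the comparison is really with his argument. Step 2 and the parameter part are fine as outlines.

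The genuine gap is in Step 3, where you say that $\Gamma^k$ vanishes on $\init_0\cup\init_1$ ``since the constraints were imposed there''. On a null hypersurface $\Gamma^k$ is not determined by the data you build in Step 1. On $\init_1$ we have $\metg_{11}=\metg_{1\sca}=0$, hence $\metg^{00}=\metg^{0\sca}=0$, and up to your sign convention
$\Gamma^0=\metg^{01}\bigl(\metg^{01}\partial_0\metg_{11}-\theta\bigr)$, with $\theta$ the expansion of $\partial_1$. The transverse derivative $\partial_0\metg_{11}$ is produced by the reduced equations in Step 2, not by Step 1. So Step 1 can only impose the ODEs obtained by formally eliminating transverse derivatives with the help of $\Gamma^k=0$. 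Concluding $\Gamma^k=0$ from those ODEs is circular until the converse is proved.

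The missing argument is the Einstein-specific core of Rendall's theorem. Write $R_{\mu\nu}=R^{(h)}_{\mu\nu}+\tfrac12(\metg_{\mu\lambda}\partial_\nu\Gamma^\lambda+\metg_{\nu\lambda}\partial_\mu\Gamma^\lambda)$, where $\Gamma^\lambda=\metg^{\alpha\beta}\Gamma^\lambda_{\alpha\beta}$ (the negative of your $\Gamma^k$) and $R^{(h)}_{\mu\nu}=0$ for your solution. On $\init_1$ consider $R_{11}$, $R_{1\sca}$ and $\metg^{\sca\scb}R_{\sca\scb}$. Once the transverse derivatives of $\metg$ are traded for $\Gamma^\lambda$, each equals the corresponding Step 1 constraint expression (zero by construction) plus terms linear in $\Gamma^\lambda$ and its derivatives along $\init_1$. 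Comparing with the identity gives a homogeneous linear transport system for $\Gamma^\lambda$ along the generators; for instance, the $11$-component becomes $\partial_1\Gamma^0=-\tfrac12\theta\,\Gamma^0$.

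This yields $\Gamma^\lambda\equiv 0$ on $\init_1$ (and likewise on $\init_0$) only if $\Gamma^\lambda=0$ on $\init_0\cap\init_1$. That condition couples $\partial_0\metg$ from $\init_0$ with $\partial_1\metg$ from $\init_1$, and it must be built into Step 1.

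This is also where your bookkeeping of the corner data is off. On $\init_1$, $\metg_{00}$ and $\metg_{0\sca}$ are neither gauge normalisations nor solutions of first-order ODEs with initial values $\metd_0,\metd_1$; in fact both vanish on $\init_0\cap\init_1$ by the gauge on $\init_0$. They satisfy second-order transport equations, for the transverse expansion and the torsion respectively. Their initial derivatives are fixed by $\Gamma^\lambda|_{\init_0\cap\init_1}=0$ in terms of $\conffact_0$, $\metd_0$, $\metd_1$, and symmetrically on $\init_0$ with $\conffact_1$.

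Once $\Gamma^\lambda=0$ on $\init_0\cup\init_1$, characteristic uniqueness for its wave equation finishes the proof. That step needs only the restriction of $\Gamma^\lambda$ to the two hypersurfaces, not a transverse derivative.
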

\par\noindent
When initial data is given on more than one coordinate chart, it is shown that the null harmonic coordinates developed from the overlapping portions of the charts must agree on the intersection of the domain of dependence of the charts. See the discussion in \cite{rendall}, 2.2.
\par
We shall write $\Sph^2$ for the unit sphere in $\R^3$ with the standard induced metric. For any $\curangulvar \in \Sph^2$ we define an {\it adapted spherical coordinate system\/} $(\theta, \phi)$ on $\Sph^2$ around $\curangulvar$ to be the spherical coordinate system on $\Sph^2$ positioned so that $\curangulvar$ coincides with the point $(1, 0, 0)$, and satisfying $\theta(\curangulvar) = 0$, $\phi(\curangulvar) = 0$; thus the metric on $\Sph^2$ in the coordinates $(\theta, \phi)$ is given by
$$
d\theta^2 + \cos^2 \theta d\phi^2.
$$
\par
Given a (time-orientable) Lorentzian metric $\meteg$ on some open set $U$ in $\R^4$ and a spacelike 2-surface $\Sigma \subset U$, we define the {\it null second fundamental forms\/} of $\meteg$ at a point $x \in \Sigma$ as follows. Choose future-directed null vectors $\L$ and $\Lbar$ at $x$ which are normal to $T_x \Sigma$ and satisfy $g(\L, \Lbar) = -1$;\footnote{There are multiple different normalization conventions in the literature; \cite{christodoulou} uses three distinct ones for various purposes. The one chosen here is convenient for our purposes since it allows us to take $\L$ and $\Lbar$ to be small perturbations of coordinate vector fields.} then the null second fundamental forms of $\meteg$ with respect to $\L$, $\Lbar$ are (see, e.g., \cite{oneill}, Chapter 4), for $\tframe_\sca$, $\tframe_\scb$ a frame on $\Sigma$,
\begin{equation}
\sff_{\sca\scb} = -\meteg(\L, \nabla_{\tframe_\sca} \tframe_\scb),\quad \sffbar_{\sca\scb} = -\meteg(\Lbar, \nabla_{\tframe_\sca} \tframe_\scb);
\end{equation}
if there are coordinates $\curu$, $\curubar$ such that $\L = \partial_\curubar$, $\Lbar = \partial_\curu$, then we also have
\begin{equation}\label{sffmetexp}
\sff_{\sca\scb} = \frac{1}{2} \partial_\curubar \meteg{}_{\sca\scb},\quad \sffbar_{\sca\scb} = \frac{1}{2} \partial_\curu \meteg{}_{\sca\scb}.
\end{equation}
In an orthonormal frame $\{ \eframe_\framea \}$, we also have
\begin{equation}
\sff_{\framea\frameb} = \meteg(\nabla_{\eframe_\framea} \L, \eframe_\frameb), \sffbar_{\framea\frameb} = \meteg(\nabla_{\eframe_\framea} \Lbar, \eframe_\frameb).
\end{equation}
Since $\L$ and $\Lbar$ are unique only up to the ({\it infinitesimal\/} boost) rescaling $\L \mapsto \alpha \L$, $\Lbar \mapsto \alpha^{-1} \Lbar$, $\sff$ and $\sffbar$ are likewise defined only up to the same rescaling. (See, e.g., \cite{klainnicolo}, (3.1.8).) Thus $\sff$ and $\sffbar$ are properly functions of a metric {\it and\/} a choice of null pair $(\L, \Lbar)$. In our work below, we shall use the size of the underlying {\it metric\/} to measure the size of the initial data.
\par
Next, we note some properties of the infinitely radially boosted Minkowski metric described in the last section (see (\ref{infboostmink})), which we write as follows:
\begin{equation}
\etaib = -d\curup \otimes d\curubarp - d\curubarp \otimes d\curup + [\currz - 2^{-1/2} \curup]^2 (d\curthetap \otimes d\curthetap + d\curphip \otimes d\curphip).
\end{equation}
\par
\begin{lemma}\label{etaibprops} Define
\begin{equation}
\Lib = \partial_\curubarp,\,\Libbar = \partial_\curup,\,\eframe_1 = 1/(\currz - 2^{-1/2} \curup) \partial_\curthetap,\,\eframe_2 = 1/(\currz - 2^{-1/2} \curup) \partial_\curphip.\label{etaibframedef}
\end{equation}
Then (for $\frameinda, \frameindb \in \{ 1, 2 \}$, $\sca, \scb \in \{ 2, 3 \}$) 
\begin{gather}
\nabla_\Libbar \eframe_\frameinda = 0,\quad \nabla \Lib = 0,\quad \nabla_\Libbar \Libbar = 0,\quad [\Lib, \Libbar ] = 0,\\
\nabla_{\eframe_\frameinda} \Libbar = -\frac{1}{\sqrt{2}(\currz - 2^{-1/2}\curup)} \eframe_\frameinda,\quad [\Libbar, \eframe_\frameinda] = \frac{1}{\sqrt{2}(\currz - \curup)} \eframe_\frameinda,\\
\nabla_{\eframe_\frameinda} \eframe_\frameindb = -\frac{1}{\sqrt{2} (\currz - 2^{-1/2}\curup)} \delta_{AB} \Lib,\quad [\eframe_\frameinda, \eframe_\frameindb] = 0,\\
\chi_{\sca\scb} = 0,\quad \chibar_{\sca\scb} = -\frac{1}{\sqrt{2}(\currz - 2^{-1/2} \curup)}.
\end{gather}
\end{lemma}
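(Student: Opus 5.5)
The plan is to verify every identity by direct computation of the Christoffel symbols of $\etaib$ in the coordinates $(\curup, \curubarp, \curthetap, \curphip)$. Write $R(\curup) = \currz - 2^{-1/2}\curup$, so that $R' = -2^{-1/2}$ and $\etaib = -d\curup\otimes d\curubarp - d\curubarp\otimes d\curup + R^2(d\curthetap^2 + d\curphip^2)$. The components are $g_{\curup\curubarp} = -1$, $g_{\sca\scb} = R^2\delta_{\sca\scb}$, and all others vanish. Thus $g^{\curup\curubarp} = -1$, $g^{\sca\scb} = R^{-2}\delta^{\sca\scb}$, and the only nonconstant component depends on $\curup$ alone. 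From the Koszul formula $\Gamma_{k,ij} = \frac12(\partial_i g_{jk} + \partial_j g_{ik} - \partial_k g_{ij})$, the only nonzero symbols are
\begin{equation*}
\Gamma^{\sca}_{\curup\scb} = \frac{R'}{R}\delta^\sca_\scb = -\frac{1}{\sqrt2 R}\delta^\sca_\scb,\qquad \Gamma^{\curubarp}_{\sca\scb} = RR'\delta_{\sca\scb} = -\frac{R}{\sqrt2}\delta_{\sca\scb}.
\end{equation*}
The latter arises from $\Gamma^{\curubarp}_{\sca\scb} = \frac12 g^{\curubarp\curup}(-\partial_\curup g_{\sca\scb})$. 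One also finds $\Gamma^{\curup}_{\sca\scb} = 0$, since nothing depends on $\curubarp$.

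From these symbols the claims follow in order.
\begin{itemize}
\item Every symbol with a lower index $\curubarp$ vanishes, because no component depends on $\curubarp$ and $g_{\curubarp\,\cdot}$ is constant. Hence $\nabla\Lib = 0$.
\item Every symbol of the form $\Gamma^{\cdot}_{\curup\curup}$ vanishes, which gives $\nabla_\Libbar\Libbar = 0$.
\item $[\Lib,\Libbar] = 0$ and $[\eframe_\frameinda,\eframe_\frameindb] = 0$ hold because these are (rescaled) coordinate fields whose coefficients depend only on $\curup$.
\item For $\nabla_\Libbar \eframe_\frameinda$, I expand $\nabla_{\partial_\curup}(R^{-1}\partial_\sca) = -R'R^{-2}\partial_\sca + R^{-1}(R'/R)\partial_\sca = 0$.
\item $\nabla_{\eframe_\frameinda}\Libbar = R^{-1}\Gamma^\sca_{\sca\curup}\partial_\sca = -\frac{1}{\sqrt2 R}\eframe_\frameinda$.
\item $[\Libbar,\eframe_\frameinda] = \partial_\curup(R^{-1})\partial_\sca = \frac{1}{\sqrt2 R}\eframe_\frameinda$. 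This shows that the denominator printed as $\currz - \curup$ in the statement should be read as $\currz - 2^{-1/2}\curup$; I would note this correction explicitly.
\item $\nabla_{\eframe_\frameinda}\eframe_\frameindb = R^{-2}\Gamma^{\curubarp}_{\sca\scb}\partial_\curubarp = -\frac{1}{\sqrt2 R}\delta_{\frameinda\frameindb}\Lib$. The derivative of $R^{-1}$ does not contribute, since $\eframe_\frameinda$ has no $\partial_\curup$ component.
\end{itemize}

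For the null second fundamental forms, I use (\ref{sffmetexp}) with $\Lib = \partial_\curubarp$ and $\Libbar = \partial_\curup$. This null pair satisfies the normalization $\etaib(\Lib,\Libbar) = -1$ and is normal to the spheres. It gives $\chi_{\sca\scb} = \frac12\partial_\curubarp g_{\sca\scb} = 0$ and $\chibar_{\sca\scb} = \frac12\partial_\curup(R^2)\delta_{\sca\scb} = RR'\delta_{\sca\scb}$. In the orthonormal frame $\eframe_\frameinda$ this becomes $-\frac{1}{\sqrt2 R}\delta_{\frameinda\frameindb}$, which is consistent with $\chibar_{\frameinda\frameindb} = \etaib(\nabla_{\eframe_\frameinda}\Libbar,\eframe_\frameindb)$ computed above. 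The stated value should therefore be read as the frame components times $\delta$.

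There is no real obstacle here; the computation is routine. The only care needed is bookkeeping: keeping track of the factors $2^{-1/2}$, of the sign coming from $g^{\curup\curubarp} = -1$, and of whether coordinate or frame components are meant. This bookkeeping is also where the apparent misprints in the statement show up.
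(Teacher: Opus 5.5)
Your proposal is correct and is the same approach as the paper: the paper's proof says only that the identities ``follow by straightforward computations,'' and your Christoffel-symbol computation for $\etaib$ carries these out correctly. Your two corrections also hold. The commutator $[\Libbar, \eframe_\frameinda]$ has denominator $\sqrt{2}(\currz - 2^{-1/2}\curup)$, which torsion-freeness together with $\nabla_\Libbar \eframe_\frameinda = 0$ confirms. The stated value of $\chibar$ is the coefficient of $\delta_{\frameinda\frameindb}$ in the orthonormal-frame components; the coordinate components are $-2^{-1/2}(\currz - 2^{-1/2}\curup)\delta_{\sca\scb}$.
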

\begin{proof} These all follow by straightforward computations. \end{proof}
\par
\par
Finally, consider the region $[0, \curu_0) \times [0, \curubar_0) \times \Sph^2$ in $\R^4$, let $\curangulvar \in \Sph^2$, and let $(\theta, \phi)$ be an adapted coordinate system on $\Sph^2$ at $\curangulvar$. We wish to implement the scaling transformation described in Section 2 above (see \ref{curscaleq}). Thus, for any $\cscal > 0$, we define {\it scaled coordinates\/} $(\curus, \curubars, \thetas, \phis)$ as follows:
\begin{equation}
\curus = \curu,\,\curubars = \cscal^{-1}\curubar, \curangulvars = \cscal^{-1/2} \curangulvar.\label{curscaleqredux}
\end{equation}
In these coordinates, the Minkowski metric has the form (see (\ref{modscalinitetasphform}))
\begin{equation}
\etap = -d\curus \otimes d\curubars - d\curubars \otimes d\curus + [\currz + 2^{-1/2} (\cscal\curubars - \curus)]^2 (d\thetas \otimes d\thetas + \cos^2 (\cscal^{1/2} \thetas) d\phis \otimes d\phis).\label{modscaletap}
\end{equation}
\par
With these preliminaries out of the way, we are now ready to state and prove our main result.
\par
\begin{theorem} Let $\chihati_{ij}(\cscal^{1/2}, \sscal, \dscal; \curubar, \curangulvar)$, $\cscal \in (-\cscalup, \cscalup)$, $\sscal \in (-\sscalup, \sscalup)$, $\dscal \in (-\dscalup, \dscalup)$, $\curubar \in (-\cscal, \cscal)$ be a smoothly parameterized family of smooth, trace-free, rank-2 covariant tensors on $(-\cscal, \cscal) \times \Sph^2$ which are tangent to the submanifolds $\{ \curubar \} \times S^2$. Suppose that $\chihati$ satisfies the following conditions:
\begin{enumerate}
\item $\chihati = 0$ for $\curubar \leq 0$.
\item Let $\curangulvar \in \Sph^2$, and let $(\theta, \phi)$ be an adapted coordinate system at $\curangulvar$. Then on a neighborhood of $0$ in $(\theta, \phi)$-space of size $\cscal^{1/2}$,
\begin{equation}
\cscal \chihati_{jk} (\cscal^{1/2}, \sscal, \dscal; \curubar, \curangulvar) = \Osize(\cscal)(\curubars, \curangulvars) + \Osize(\sscal)(\curubars, \curangulvars).\label{chihaticondb}
\end{equation}
\end{enumerate}
Then there is a unique solution to the Einstein vacuum equations $\metg$ on a neighborhood $(-\curuf, \curuf) \times [0, \cscal\cdot\curubarf) \times \Sph^2$ of $\init_0 \cap \init_1$, for some $\curuf, \curubarf > 0$ independent of $\cscal$, $\sscal$, and $\dscal$, which is Minkowskian on $\curubar \leq 0$. 
If, moreover, $\chihati$ satisfies, for some $\curustr \in (0, \curuf)$, $\curubarstr \in (0, \cscal \cdot \curubarf)$, for every $\curangulvar \in \Sph^2$ and on a neighborhood of $0$ of size $\cscal^{1/2}$ in an adapted coordinate system $(\theta, \phi)$, and with $\cscal = \cssscoeff \sscal^2$ for some constant $\cssscoeff$,
\begin{enumerate}
\item \begin{equation}\label{angdiffbounds} \cscal^2 (\partial_\theta^2 + \partial_\phi^2) \chihati|_{\theta = \phi = 0} = \Osize(\dscal\sscal), \end{equation}
\item \begin{equation}\label{shearbounds} \frac{1}{\currz} - \frac{\curuf}{\currz^2} + \shearboundsconst \leq \frac{1}{2} \int_0^{\cscal\cdot \curubarf} |\chihati|^2 (0, s, \theta, \phi)\,ds \leq \frac{1}{\currz} - \shearboundsconst, \end{equation}
\end{enumerate}
for some constant $\shearboundsconst$, then $\init_1$ will be free of trapped surfaces but the development $\metg$ will contain a trapped surface near the sphere $\curu =\curustr$, $\curubar = \cscal\cdot\curubarstr$.
\end{theorem}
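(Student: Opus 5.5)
The argument works entirely in the scaled coordinates (\ref{curscaleqredux}) and treats $(\cscal^{1/2},\sscal,\dscal)$ as the parameter vector $\parvar$ of Definition~\ref{Osizedef}. Under (\ref{chihaticondb}), the scaled initial data are a smooth family that reduces, at $\parvar=0$, to the data of the infinitely boosted metric $\etaib$. Rendall's theorem (Theorem~\ref{rendallexistence}) is then applied once, to this family near $\parvar=0$, rather than separately for each $\cscal$.

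\textbf{Step 1: existence.} Fix $\curangulvar\in\Sph^2$ and an adapted chart. In the coordinates $(\curus,\curubars,\thetas,\phis)$, on a region of unit size (which corresponds to angular size $\cscal^{1/2}$), I construct characteristic data. On $\initin$ I take the data of $\etap$ in (\ref{modscaletap}). On $\initout$ I integrate the null constraint (Raychaudhuri) equation along $\partial_{\curubars}$ with shear $\cscal\chihati$, which is rescaled so that its pointwise size is $\Osize(\cscal)+\Osize(\sscal)$. Since $\etap=\etaib+\Osize(\cscal)$ and the constraint ODE depends smoothly on the data, Lemma~\ref{OsizeODE} shows that the induced metric data $\initmet$, $\conffact$, $\conffact_i$, $\metd_i$ form a smooth family in $\parvar$ and equal the $\etaib$ data plus $\Osize(\cscal)+\Osize(\sscal)$. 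Theorem~\ref{rendallexistence} with parameters gives a neighborhood $U'$, independent of $\parvar$ for $\parvar$ small, on which a solution exists and depends smoothly on $\parvar$. In scaled coordinates $U'$ contains a box $(-\curuf,\curuf)\times[0,\curubarf)\times\{|\curangulvars|<1\}$; in the original coordinates this is $[0,\cscal\curubarf)$ in $\curubar$ and angular size $\cscal^{1/2}$. Compactness of $\Sph^2$ gives finitely many patches for each fixed $\cscal$. I will patch them with Rendall's uniqueness in the domain of dependence, as in \cite{rendall}, 2.2. Uniqueness, together with $\chihati=0$ for $\curubar\le0$, gives Minkowski on $\curubar\le0$. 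The point that makes $\curuf$ and $\curubarf$ independent of the parameters is that every patch is the same parameter-family problem, differing only in which rotation of $\Sph^2$ is used. To make this rigorous I would include the rotation as an additional compact parameter.

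\textbf{Step 2: expansion of the metric.} By Lemma~\ref{Osizecompose}, the solution satisfies $\metg=\etaib+\Osize(\cscal)+\Osize(\sscal)$ together with all coordinate derivatives. Hence the null frame, the connection coefficients of Lemma~\ref{etaibprops}, and $\chibar$ agree with their $\etaib$ values up to such errors. This is where the wave-equation character of $\chihat$ should appear. To leading order, $\nabla_{\Lbar}\chihat$ is governed by the linearisation about $\etaib$, with errors $\Osize(\cscal)+\Osize(\sscal^2)$. I will then propagate the Raychaudhuri equation
\[
\partial_{\curubars}\tr\chi=-\tfrac{\cscal}{2}(\tr\chi)^2-\cscal|\chihat|^2
\]
along the outgoing null generators of $\init_{\curus=\curustr}$.

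\textbf{Step 3: trapping.} On $\initout$, $\tr\chi$ starts at $2/\currz$, up to normalisation, and the upper bound in (\ref{shearbounds}) keeps it positive, so $\initout$ contains no trapped surface. For the sphere at $\curus=\curustr$ I need to control how $\int|\chihat|^2$ changes from $\curus=0$ to $\curus=\curustr$. That change is driven by $\chibar\approx-1/(\sqrt2(\currz-2^{-1/2}\curus))$ acting in the transport equation for $\chihat$ along $\Lbar$, together with angular Laplacian terms, which I bound using (\ref{angdiffbounds}). The choice $\cscal=\cssscoeff\sscal^2$ is what makes the $|\chihat|^2$ contribution of order $\cscal$ match the leading $\Osize(\sscal^2)$ part. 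The lower bound in (\ref{shearbounds}), including its $\curuf/\currz^2$ term that accounts for the decrease of the area radius, then forces $\tr\chi<0$ at $\curubar=\cscal\curubarstr$. Meanwhile $\tr\chibar<0$ persists by closeness to $\etaib$.

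\textbf{Main obstacle.} I expect Step 3 to be the hardest part. It requires showing that the error terms in the $\chihat$ transport equation are genuinely $\Osize(\dscal\sscal)$ and smaller, uniformly over the sphere, so that the strict margins $\shearboundsconst$ survive. I would first try to derive the leading-order linear equation for $\sscal^{-1}\chihat$ by differentiating the Einstein equations in $\sscal$ at $\parvar=0$ around $\etaib$, and then apply Proposition~\ref{Osizechar} to control the remainder.
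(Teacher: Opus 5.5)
Your Step~1 and the closing Raychaudhuri bookkeeping follow the paper. That includes the scaled coordinates, a single parametric application of Theorem~\ref{rendallexistence}, $\metg=\etaib+\Osize(\cscal)+\Osize(\sscal)$, and $\cscal=\cssscoeff\sscal^2$ so that $|\chihats|^2$ competes with $\tr\chi'\sim\cscal$.

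The gap is the step you defer, and the mechanism you sketch for it does not work as stated. The first-order transport equation for $\chihats$ along $\Lbars$ contains no angular Laplacian of $\chihats$. Commute $\nabla_{\Lbars}$ with $\nabla_{\eframe_\framea}$ and use $\Ric=0$ to reduce the curvature term to a pure trace. The one non-ignorable source left is $\eframe_\framea\bigl[\metg(\nabla_{\Lbars}\Ls,\eframe_\frameb)\bigr]$. This is the torsion term ($\nabla\hat{\otimes}\eta$ in (\ref{chihatevoleq}), where $\eta$ denotes the torsion, not the parameter). It is $\Osize(\sscal)$, the same size as $\chihats$, so it cannot be dropped, and (\ref{angdiffbounds}) says nothing about it. If you eliminate the torsion through its $\Ls$-transport equation and Codazzi, what appears is $\int_0^{\curubars}\Delta\chihats$ at $\curus>0$. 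But (\ref{angdiffbounds}) controls the Laplacian only on $\init_1=\{\curus=0\}$, so bounding it in the interior presupposes the conclusion. Linearising in $\sscal$ about $\etaib$ does not remove this by itself. In harmonic gauge it gives a linear wave system for the metric perturbation, and a closed equation for the shear still has to be extracted by the same null-frame computation.

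The missing idea is to pass to a second-order equation by applying $\nabla_{\Ls}$.
\begin{itemize}
\item Using $\nabla_{\Ls}\Ls=0$, $\nabla_{\Ls}\eframe_\framea=0$ and $[\Ls,\Lbars]=\Osize(\sscal)$, one gets $\nabla_{\Ls}\metg(\nabla_{\Lbars}\Ls,\eframe_\frameb)=\Riem(\eframe_\frameb,\Ls,\Ls,\Lbars)+\Osize(\sscal^2)$.
\item $\Ric(\eframe_\frameb,\Ls)=0$, the Codazzi identity, and $\tr\chi'=\Osize(\sscal^2)$ turn this into $(\mathrm{div}\,\chihats)_\frameb+\Osize(\sscal^2)$.
\item The trace-free symmetric part of $\nabla_{\eframe_\framea}(\mathrm{div}\,\chihats)_\frameb$ is $\tfrac12\Delta\chihats$.
\end{itemize}
So at leading order, up to normalisation, $\chihats$ satisfies the linear wave equation
\[
\bigl\{-2\partial_{\curus}\partial_{\curubars}+(\currz-2^{-1/2}\curus)^{-2}(\partial_{\thetas}^2+\partial_{\phis}^2)\bigr\}\bigl[(\currz-2^{-1/2}\curus)\chihats\bigr]=\Osize(\sscal^2).
\]

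Only now does (\ref{angdiffbounds}) enter, and only through the data on $\init_1$. The geometric-optics profile $\chihatgo$ is $\chihatsi$ transported in $\curus$ by the $\tr\chibar$ factor. It solves this equation up to its angular Laplacian, which is $\Osize(\dscal\sscal)$. The difference $\chihats-\chihatgo$ therefore has vanishing data on $\init_0$ and $\init_1$ and forcing $\Osize(\dscal\sscal)+\Osize(\sscal^2)$. Smooth parameter dependence for this linear problem, together with Proposition~\ref{Osizechar}, then gives $\chihats=\chihatgo+\Osize(\dscal\sscal)$, after taking $\sscal=\sscal(\dscal)$ with $\sscal(0)=0$ as the paper does.

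That is the input your Step~3 needs. It gives $|\chihats|^2=|\chihatgo|^2+\Osize(\dscal\sscal^2)$, and the Raychaudhuri comparison with margin $\shearboundsconst$ then goes through.
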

\par
We write $\chihati$ as a function of $\cscal^{1/2}$ because (see (\ref{modscaletap})) the Minkowskian portion of the initial data is only smooth in $\cscal^{1/2}$, not in $\cscal$, while the literature works in terms of $\cscal$ and we write $\cscal$ far more often than $\cscal^{1/2}$.\footnote{This is a minor technical point with no practical importance.} We note that the aim of the somewhat odd condition (\ref{chihaticondb}) is to allow us to derive the bound in (\ref{chihatsicondb}) (below) on the shear in the {\it scaled\/} coordinates, where it is more natural. The smallness condition in (\ref{angdiffbounds}) should be compared to the auxiliary smallness condition in \cite{klainrodi}, (31). Further, the use of symmetric intervals in the above statement is for simplicity only; all that is required is some interval containing 0. We point out that the initial data constructed in \cite{christodoulou}, Chapter 2, satisfies (\ref{chihaticondb}) with $\cscal = \sscal^2$, and satisfies (\ref{angdiffbounds}) with $\dscal = \cscal$; in fact, in scaled coordinates, it is not hard to see that his metric $m$ is $\id + \Osize(\delta^{1/2})$, while the conformal factor $\phi$ is $1 + \Osize(\alpha)$. This data, of course, has much smaller angular derivatives than are actually required.
\par
\begin{proof} We proceed as follows. For each $\curangulvar \in \Sph^2$ we will show that there is a neighborhood $U \subset \Sph^2$ of $\curangulvar$ and values $\curusff(\curangulvar)$, $\curubarsff(\curangulvar)$ such that the Einstein vacuum equations have a solution $\metg$ on $\soldom = [0, \curusff(\curangulvar)) \times [0, \curubarsff(\curangulvar)) \times U$ realizing the given initial data, in coordinates $(\curus, \curubars, \thetas, \phis)$ which are harmonic for $\metg$. Moreover, as discussed in Rendall \cite{rendall}, Section 5, for $\curangulvar' \in \Sph^2$ with corresponding neighborhood $U'$, region $\soldom'$ and coordinates $(\curus', \curubars', \thetas', \phis')$ such that the corresponding neighborhood $U'$ meets $U$, the coordinates $(\curus, \curubars)$ and $(\curus', \curubars')$ will agree on $\soldom \cap \soldom'$. Since $\Sph^2$ is compact, this will establish the first part of the theorem.
\par
Thus let $\curangulvar \in \Sph^2$, let $(\theta, \phi)$ be an adapted coordinate system around $\curangulvar$, and define scaled coordinates $(\curus, \curubars, \curangulvars)$ by (\ref{curscaleqredux}). We define the scaled shear tensor by (cf.\ (\ref{sffmetexp})) 
\begin{equation}
\chihatsi_{\sca\scb} = \frac{1}{2} \partial_\curubars \metgs_{\sca\scb} = \frac{1}{2} \cscal\cdot \cscal^{-1} (\cscal^{1/2})^2 \partial_\curubar \metg_{\sca\scb} = \cscal\cdot\chihati_{\sca\scb} = \Osize(\sscal).\label{chihatsicondb}
\end{equation}
Now on $\init_1$, $\tr \chisi$ must satisfy the Raychaudhuri equation (\cite{hawkingEllis}, \cite{Poisson})
\begin{equation}\label{raychaudhuria}
\partial_{\curubars} \tr \chisi = -\frac{1}{2} (\tr \chisi)^2 - |\chihatsi|^2,
\end{equation}
while the condition that the spacetime be Minkowskian for $\curubars \leq 0$ gives the initial condition
\begin{equation}\label{raychaudhuriainit}
\tr \chisi|_{\curubars = 0} = \frac{2\cscal}{\currz}.
\end{equation}
From this we get two results. First, applying (\ref{chihatsicondb}) and Lemma \ref{OsizeODE}, we find that
\begin{equation}
\tr\chisi = \Osize(\sscal) + \Osize(\cscal).\label{trchisiOsizea}
\end{equation}
Second, integrating $\partial_{\curubars} \tr\chisi \leq -\frac{1}{2} (\tr\chisi)^2$, we obtain the simple $L^\infty$ bound
\begin{equation}
\tr \chisi \leq \frac{2\cscal}{\currz + \cscal \curubars} = \Osize(\cscal);\label{trchisilbounda}
\end{equation}
substituting this back into (\ref{raychaudhuria}) and integrating gives on $\init_1$ the $L^\infty$ bound
\begin{equation}\label{trchisibounds}
\tr \chisi \geq \frac{2\cscal}{\currz + \cscal \curubars} - \int_0^\curubars |\chihatsi|^2\,d\curubars' = \Osize(\cscal) + \Osize(\sscal^2).
\end{equation}
We emphasize that (\ref{trchisilbounda}) -- (\ref{trchisibounds}) do {\it not\/} directly imply (\ref{trchisiOsizea}). We shall use (\ref{trchisibounds}) below to show that $\init_1$ does not contain trapped surfaces.
\par
Now write $\metg_{\sca\scb}|_{\init_1} = \conffact \initmet_{\sca\scb}$ for $\det\initmet = 1$. Then it is not hard to show that
\begin{equation}
\partial_\curubars \log \conffact = \tr \chisi
\end{equation}
and
\begin{equation}
\partial_\curubars \initmet = \frac{2}{\conffact} \chihatsi.
\end{equation}
We may thus solve for $\conffact$ and then $\initmet$. The bounds above clearly imply that $\initmet = \Osize(\sscal)$. For data on $\init_0$ we set the data for the $\cscal$-scaled Minkowski metric as given in (\ref{modscaletap}); for the data on the intersection sphere, $\conffact_1$, $\conffact_2$, $\metd_1$, and $\metd_2$, we likewise use the data arising from (\ref{modscaletap}). By Theorem \ref{rendallexistence}, then, there is a neighborhood of $(0, 0, 0, 0)$, say $\solvreg = [0, \curusff) \times [0, \curubarsff) \times U$, on which the Einstein vacuum equations possess a solution $\metg$ realizing the given initial data. Moreover, by the smoothness property in Theorem \ref{rendallexistence}, it is clear that $\metg = \Osize(\cscal) + \Osize(\sscal)$, and that the region $\solvreg$ can be taken to be independent of $\cscal$, $\sscal$, $\dscal$. As noted above, this completes the proof of the existence portion of the above theorem.
\par
Applying standard existence and smoothness results from ODE theory (see, e.g., \cite{sideris}, Chapter 6), together with the precompactness of $\solvreg$, it is then not hard to see that we may cover a (possibly smaller) neighborhood of $\solvreg$ with geodesics starting from $\{ \curubar = 0 \}$ and initially parallel to $\L$. This will give a map $\geodmap : [0, \curmaxLaff) \times [0, \curusff) \times U \rightarrow \solvreg$ which covers some neighborhood of $(0, 0, 0, 0)$. To see that $\geodmap$ is injective, note that at $\cscal = \sscal = 0$ $\geodmap$ is simply the identity, so that from $\metg = \Osize(\cscal) + \Osize(\sscal)$ and smoothness considerations it is clear that the derivative of $\geodmap$ satisfies $D\geodmap - \id = \Osize(\cscal) + \Osize(\sscal)$. Lemma 6.1.2 in \cite{spntcarXiv} then shows that $\geodmap$ must be injective, from which we see that it is a diffeomorphism onto its image. Thus by shrinking $\solvreg$ if necessary we may assume that it is covered by a null geodesic coordinate system along $\L$. Moreover, the maximum affine parameter value $\curmaxLaff$ can be taken independent of $\cscal$, $\sscal$, $\dscal$, and (using compactness of $\Sph^2$ again) $\curangulvar$.
We shall define a spatial frame by parallely-transporting the frame $\eframe_1, \eframe_2$ in (\ref{etaibframedef}) along $\Ls$, see (\ref{pteframedef}) below.
\par
Since all quantities in Lemma \ref{etaibprops} are smooth in the metric, while clearly $\metg = \etaib$ at $\cscal = \sscal = 0$, their values with respect to $\metg$ will differ from their values in Lemma \ref{etaibprops} by terms $\Osize(\cscal) + \Osize(\sscal)$ by Proposition \ref{Osizechar}. This is a key point which will be crucial in proving the existence of trapped surfaces, to which we now turn.
\par
Now suppose that $\chihati$ satisfies (\ref{angdiffbounds})-(\ref{shearbounds}), and that the condition $\cscal = \cssscoeff \sscal^2$ is satisfied for some constant $\cssscoeff$.
We will now show that the solution just constructed possesses a closed trapped surface. We will use a mechanism similar to that in Christodoulou \cite{christodoulou} and Klainerman and Rodnianski \cite{klainrodi}, though we will derive all needed equations anew in our setting, following Christodoulou's treatment where relevant.
\par
(As an aside, for readers familiar with those papers, we give the following informal overview. Consider the following null structure and constraint equations (see, e.g., \cite{klainrodi}, (47), (48), (49), (51))
\begin{align}
\nabla_3 \hat{\chi} + \frac{1}{2} \tr\underline{\chi} \hat{\chi} &= \nabla \hat{\otimes} \eta + 2\underline{\omega} \hat{\chi} - \frac{1}{2} \tr\chi \hat{\underline{\chi}} + \eta \hat{\otimes} \eta,\label{chihatevoleq}\\
\nabla_4\eta &= -\chi \cdot (\eta - \underline{\eta}) - \beta,\\
\text{div}\,\hat{\chi} &= \frac{1}{2} \nabla\tr\chi - \frac{1}{2} (\eta - \underline{\eta}) \cdot (\hat{\chi} - \frac{1}{2} \tr\chi) - \beta,\\
\end{align}
where we note that for us $\nabla_4 \sim \nabla_\L$, $\nabla_3 \sim \nabla_\Lbar$ (in the unscaled coordinates), and refer to \cite{klainrodi} for definitions of the quantities and notation used here. Now it is not hard to see that, up to terms of size $\Osize(\sscal^2)$, the equations above should reduce to
\begin{align}
\nabla_3 \hat{\chi} + \frac{1}{2} \tr\underline{\chi} \hat{\chi} &= \nabla \hat{\otimes} \eta,\\
\beta &= -\nabla_4\eta,\\
\beta &= \text{div}\,\hat{\chi}.
\end{align}
Differentiating the first of these using $\nabla_4$, ignoring terms resulting from commuting derivatives, and using the final two equations, we obtain an equation of the form
\begin{equation}
\nabla_4 \nabla_3 \hat{\chi} + \frac{1}{2} \tr\underline{\chi} \hat{\chi} = -\nabla\hat{\otimes}\text{div}\,\hat{\chi}.
\end{equation}
This is very close to a flat-space (Minkowski) wave equation for (some multiple of) $\hat{\chi}$. The next couple pages will make all of the above manipulations rigorous in our setting.)
\par
Define $\Lbars = \partial_\curus$, $\Ls = \partial_\curubars$. We will work in the frame $\{ \Lbars, \Ls, \eframe_1, \eframe_2 \}$ obtained by parallely-transporting the vectors
\begin{equation}
\eframei_1 = \frac{1}{\currz - 2^{-1/2} \curus} \partial_\thetas,\quad \eframei_2 = \frac{1}{(\currz - 2^{-1/2} \curus) \cos(\cscal^{1/2} \thetas)} \partial_\phis\label{pteframedef}
\end{equation}
along $\Ls$. We use indices $\framea$, $\frameb$, $\framec$, etc., to denote indices in the frame $\{ \eframe_1, \eframe_2 \}$. In this frame, we have
\begin{equation}
\chi_{\framea\frameb} = \metg(\nabla_{\eframe_\framea} \Ls, \eframe_\frameb).
\end{equation}
We begin by seeking an equation analogous to (\ref{chihatevoleq}). We shall call a term {\it ignorable\/} if it is of size $\Osize(\sscal^\ell)$ where $k \geq 2$; recall that we have set $\cscal = \sscal^2$. We calculate:
\begin{equation}
\nabla_{\Lbars} \chi_{\framea\frameb} = \metg(\nabla_\Lbars \nabla_{\eframe_\framea} \Ls, \eframe_\frameb) + \metg(\nabla_{\eframe_\framea} \Ls, \nabla_{\Lbars} \eframe_\frameb).
\end{equation}
Now by Lemma \ref{etaibprops}, we have
\begin{equation}
\nabla \Ls = \Osize(\cscal),\,\nabla_{\Lbars} \eframe_\framea = \Osize(\cscal),
\end{equation}
so that the last term above is ignorable. Commuting $\nabla_\Lbars$ and $\nabla_{\eframe_\framea}$, we obtain
\begin{equation}\label{derivchievoleqa}
\nabla_{\Lbars} \chi_{\framea\frameb} = \metg(\nabla_{\eframe_\framea} \nabla_\Lbars \Ls, \eframe_\frameb) + \Riem(\eframe_\frameb, \Ls, \Lbars, \eframe_\framea) + \metg(\nabla_{[\Lbars, \eframe_\framea]} \Ls, \eframe_\frameb) + \Osize(\sscal^2).
\end{equation}
From Lemma \ref{etaibprops} again, we obtain
\begin{equation}
\metg(\nabla_{[\Lbars, \eframe_\framea]} \Ls, \eframe_\frameb) = \frac{1}{\sqrt{2}(\currz - 2^{-1/2} \curus)} \chi_{\framea\frameb} + \Osize(\sscal^2).
\end{equation}
Further, noting that in the frame $\{ \Ls, \Lbars, \eframe_1, \eframe_2 \}$ the inverse metric is $\Osize(\sscal)$ away from that in Minkowski space, we find that
\begin{multline}
\Ric(\eframe_\framea, \eframe_\frameb) = -\Riem(\Ls, \eframe_\framea, \Lbars, \eframe_\frameb) - \Riem(\Lbars, \eframe_\framea, \Ls, \eframe_\frameb)\\
{}+ \sum_\framec \Riem(\eframe_\framec, \eframe_\framea, \eframe_\framec, \eframe_\frameb) + \Osize(\sscal^2);
\end{multline}
setting $\framea = 1$, $\frameb = 2$, we obtain
\begin{equation}
\Riem(\Ls, \eframe_1, \Lbars, \eframe_2) + \Riem(\Ls, \eframe_2, \Lbars, \eframe_1) = -\Ric(\eframe_1, \eframe_2) + \Osize(\sscal^2),
\end{equation}
while setting $\framea = \frameb$ we obtain
\begin{equation}
\Riem(\Ls, \eframe_\framea, \Lbars, \eframe_\framea) = -\frac{1}{2} \Ric(\eframe_\framea, \eframe_\framea) + \frac{1}{2} \Riem(\eframe_1, \eframe_2, \eframe_1, \eframe_2) + \Osize(\sscal^2).
\end{equation}
Since $\Ric = 0$ for a solution to the Einstein vacuum equations, we deduce that the term $\Riem(\eframe_\frameb, \Ls, \Lbars, \eframe_\framea)$ in (\ref{derivchievoleqa}) must be a pure trace, up to terms of size $\Osize(\sscal^2)$.
\par
It thus remains only to treat the term $\metg(\nabla_{\eframe_\framea} \nabla_\Lbars \Ls, \eframe_\frameb)$ in (\ref{derivchievoleqa}). (This term is analogous to the one non-ignorable term $\nabla \hat{\otimes} \eta$ in the evolution equation for $\hat{\chi}$ in Christodoulou's setting as described above.) We observe that we may write
\begin{equation}
\metg(\nabla_{\eframe_\framea} \nabla_\Lbars \Ls, \eframe_\frameb) = -\metg(\nabla_\Lbars \Ls, \nabla_{\eframe_\framea} \eframe_\frameb) + \nabla_{\eframe_\framea} \metg(\nabla_{\Lbars} \Ls, \eframe_\frameb);
\end{equation}
now by Lemma \ref{etaibprops} again, $\nabla_{\eframe_\framea} \eframe_\frameb$ is a pure trace plus $\Osize(\sscal)$, while $\nabla \Ls = \Osize(\sscal)$, so that the first term is a pure trace plus a term of size $\Osize(\sscal^2)$. The second term is slightly more tricky (in Christodoulou's setting, we need to introduce the equations for $\nabla_4 \eta$ and $\text{div}\,\chi$). We observe
\begin{align}\label{nablaLsmetgeq}
\nabla_\Ls \metg(\nabla_\Lbars \Ls, \eframe_\frameb) &= \metg(\nabla_\Ls \nabla_\Lbars \Ls, \eframe_\frameb) + \metg(\nabla_\Lbars \Ls, \nabla_\Ls \eframe_\frameb)\\
&= \Riem(\eframe_\frameb, \Ls, \Ls, \Lbars) + \metg(\nabla_{[\Ls, \Lbars]} \Ls, \eframe_\frameb),
\end{align}
where we have used $\nabla_\Ls \Ls = 0$, $\nabla_\Ls \eframe_\framea = 0$. Since $[\Ls, \Lbars] = \Osize(\sscal)$, the second term above is of size $\Osize(\sscal^2)$ and hence ignorable. Now as before we have
\begin{multline}
\Ric(\eframe_\frameb, \Ls) = -\Riem(\Lbars, \eframe_\frameb, \Ls, \Ls) - \Riem(\Ls, \eframe_\frameb, \Lbars, \Ls)\\
{}+ \sum_\framec \Riem(\eframe_\framec, \eframe_\frameb, \eframe_\framec, \Ls) + \Osize(\sscal^2),
\end{multline}
so that for $\Ric(\eframe_\frameb, \Ls) = 0$ we have
\begin{equation}\label{Riemtraca}
\Riem(\Ls, \eframe_\frameb, \Lbars, \Ls) = \sum_\framec \Riem(\eframe_\framec, \eframe_\frameb, \eframe_\framec, \Ls) + \Osize(\sscal^2).
\end{equation}
Further, 
\begin{align}
\nabla_{\eframe_\framea} \chi_{\frameb\framec} - \nabla_{\eframe_\framec} \chi_{\frameb\framea} &= \nabla_{\eframe_\framea} \metg(\nabla_{\eframe_\framec} \Ls, \eframe_\frameb) - \nabla_{\eframe_\framec} \metg(\nabla_{\eframe_\framea} \Ls, \eframe_\frameb)\\
&= \Riem(\eframe_\frameb, \Ls, \eframe_\framea, \eframe_\framec) + \metg(\nabla_{[\eframe_\framea, \eframe_\framec]} \Ls, \eframe_\frameb) + \Osize(\sscal^2),
\end{align}
where we have used the fact that (see Lemma \ref{etaibprops})
\begin{align}
\metg(\nabla_{\eframe_\framec} \Ls, \nabla_{\eframe_\framea} \eframe_\frameb) &= \Osize(\sscal^2) + \metg(\nabla_{\eframe_\framec} \Ls, -\frac{1}{\sqrt{2} (\currz - 2^{-1/2} \curus)} \delta_{\framea\frameb} \partial_\curubars)\\
&= \Osize(\sscal^2) - \frac{1}{\sqrt{2} (\currz - 2^{-1/2} \curus)} \metg(\nabla_{\eframe_\framec} \Ls, \Ls) = \Osize(\sscal^2).
\end{align}
Similarly, from Lemma \ref{etaibprops} it is easy to see that $\metg(\nabla_{[\eframe_\framea, \eframe_\framec]} \Ls, \eframe_\frameb) = \Osize(\sscal^2)$. Tracing then gives
\begin{align}
\metg^{\framea\frameb} \nabla_{\eframe_\framea} \chi_{\frameb\framec} - \nabla_{\eframe_\framec} \tr \chi &= \Riem(\Ls, \eframe_\framec, \Lbars, \Ls) + \Osize(\sscal^2)\\
&= \nabla_\Ls \metg(\nabla_{\Lbars} \Ls, \eframe_\framec) + \Osize(\sscal^2)
\end{align}
by (\ref{Riemtraca}) and (\ref{nablaLsmetgeq}). We thus obtain (noting that $\metg(\nabla_\Lbars \Ls, \eframe_\frameb)$ is a scalar, and recalling that $\tr \chi = \Osize(\sscal^2)$)
\begin{align}
\nabla_\Ls \metg(\nabla_{\eframe_\framea} \nabla_\Lbars \Ls, \eframe_\frameb) &= \Osize(\sscal^2) + \nabla_{\eframe_\framea} \nabla_\Ls \metg(\nabla_{\Lbars} \Ls, \eframe_\frameb)\\
&= \Osize(\sscal^2) + \nabla_{\eframe_\framea} \left( \metg^{\framec\framed} \nabla_{\eframe_\framec} \chi_{\frameb\framed}\right),
\end{align}
and finally
\begin{equation}
\nabla_\Ls\nabla_{\Lbars} \chi_{\framea\frameb} = \frac{1}{\sqrt{2}(\currz - 2^{-1/2} \curus)} \chi_{\framea\frameb} + \nabla_{\eframe_\framea} \metg^{\framec\framed} \nabla_{\eframe_\framec} \chihats_{\frameb\framed} + \Osize(\sscal^2).\label{chihatwavesemifinal}
\end{equation}
Now
\begin{align}
\nabla_{\eframe_\framea} \metg^{\eframec\eframed} \nabla_{\eframe_\framec} \chihats_{\eframeb\eframed} &= \nabla_{\eframe_\framea} \nabla_{\eframe_1} \chihats_{\eframeb 1} + \nabla_{\eframe_\framea} \nabla_{\eframe_2} \chihats_{\eframeb 2} \\
&= \left( \begin{matrix} \nabla_{\eframe_1}^2 \chihats_{11} + \nabla_{\eframe_1} \nabla_\eframe{2} \chihats_{12} & \nabla_{\eframe_1}^2 \chihats_{21} + \nabla_{\eframe_1} \nabla_{\eframe_2} \chihats_{22} \\
\nabla_{\eframe_2} \nabla_{\eframe_1} \chihats_{11} + \nabla_{\eframe_2}^2 \chihats_{12} & \nabla_{\eframe_2} \nabla_{\eframe_1} \chihats_{21} + \nabla_{\eframe_2}^2 \chihats_{22} \end{matrix} \right)\label{spderivchihat}
\end{align}
the trace-free symmetric part of which consists of the two terms (using $\chihats_{11} = -\chihats_{22}$, $\chihats_{12} = \chihats_{21}$) 
\begin{gather}
\frac{1}{2} (\nabla_{\eframe_1}^2 \chihats_{11} + \nabla_{\eframe_2}^2 \chihats_{22} + [\nabla_{\eframe_1}, \nabla_{\eframe_2} ]\chihats_{12} ),\\
\frac{1}{2} ( \nabla_{\eframe_1}^2 \chihats_{12} + \nabla_{\eframe_2}^2 \chihats_{12} + [\nabla_{\eframe_1}, \nabla_{\eframe_2}] \chihats_{22} ),
\end{gather}
from which it is not hard to see (using the fact that on scalars $\nabla_{\eframe_1}$ and $\nabla_{\eframe_2}$ commute up to terms of size $\Osize(\sscal)$) that the trace-free symmetric part of (\ref{spderivchihat}) is simply
\begin{equation}
\frac{1}{2} (\nabla_{\eframe_1}^2 + \nabla_{\eframe_2}^2) \chihats_{\framea\frameb} + \Osize(\sscal^2).
\end{equation}
Substituting this back into (\ref{chihatwavesemifinal}), replacing derivatives along $\Ls$ and $\Lbars$ by $\partial_\curubars$ and $\partial_\curus$ (which, since we are applying them to $\chihats$, will result in an error of size $\Osize(\sscal^2)$) and simplifying, we obtain finally that
\begin{multline}
-2 \partial_{\curus} \partial_\curubars [\sqrt{2}(\currz - 2^{-1/2}\curus) \chihats_{\framea\frameb}]\\
+ \frac{1}{(\currz - 2^{-1/2}\curus)^2} [\partial_\thetas^2 + \partial_\phis^2] [\sqrt{2}(\currz - 2^{-1/2} \curus) \chihats_{\framea\frameb}] = \Osize(\sscal^2).
\end{multline}
Now define on $\solvreg$
\begin{equation}
\chihatgo = \frac{\currz}{\currz - \curus} \chihats|_{\curus = 0}.
\end{equation}
Then on $\solvreg$
\begin{multline}
\{ -2\partial_\curus \partial_\curubars + \frac{1}{(\currz - 2^{-1/2}\curus)^2} [\partial_\thetas^2 + \partial_\phis^2]\} [(\currz - \curus) \chihatgo]\\
= \frac{\currz}{(\currz - 2^{-1/2} \curus)^2} [\partial_\thetas^2 + \partial_\phis^2] \chihatsi = \Osize(\dscal\sscal)
\end{multline}
by assumption. If we now assign $\sscal = \sscal(\dscal)$, $\sscal(\dscal)$ smooth, $\sscal(0) = 0$, then $\Osize(\sscal^2) + \Osize(\dscal\sscal) = \Osize(\dscal\sscal)$ and we may write
\begin{equation}
\{ -2\partial_\curus \partial_\curubars + \frac{1}{(\currz - 2^{-1/2} \curus)^2} [\partial_\thetas^2 + \partial_\phis^2]\} [(\currz - \curus) (\chihats - \chihatgo)] = \Osize(\dscal\sscal).
\end{equation}
Now we note that Rendall's argument concerning smoothness of solutions to partial differential equations with initial data depending smoothly on a parameter can readily be extended to equations of the above form where there is a forcing term depending smoothly on a parameter; thus we obtain
\begin{equation}
(\currz - \curus) (\chihats - \chihatgo) = \Osize(\dscal\sscal),
\end{equation}
and finally
\begin{equation}
\chihats_{\framea\frameb} = \frac{\currz}{\currz - \curus} \chihatsi_{\framea\frameb} + \Osize(\dscal\sscal).
\end{equation}
We now wish to integrate the Raychaudhuri equation along a geodesic starting from $\{ \curubars = 0 \}$ and passing through $\geodmap(\Lafftr, \curustr, \thetas, \phis) = (\curustr, \curubarstr, \thetas, \phis) + \Osize(\sscal)$, for some specific values of $\Lafftr$, $\curustr$. We note that such a geodesic, together with its tangent vector, will differ from the coordinate line $(\curus, \thetas, \phis) = \text{constant}$ by terms of size $\Osize(\sscal)$. Thus when integrating $|\chihats|^2$ in the Raychaudhuri equation we may replace the geodesic by this coordinate line and absorb the $\Osize(\sscal^3)$ error terms into $\Osize(\dscal\sscal^2)$, obtaining
\begin{align}
\tr\chi &\leq \frac{2\cscal}{\currz - \curustr} - 2\left(\frac{\currz}{\currz - \curustr}\right)^2 \int_0^{\curubars} \left\{|\chihats_{11}|^2 + |\chihats_{12}|^2\right\}|_{\curus = 0} + \Osize(\dscal\sscal^2)\\
&= \Osize(\dscal\sscal^2) + \frac{2\cscal}{\currz - \curustr} \left[ 1 - \frac{\currz^2 \int_0^{\curubars} |\chihats_{11}|^2 + |\chihats_{12}|^2}{\currz - \curustr}\right].
\end{align}
If the final lower bound on $\int_0^{\curubars} |\chihats_{11}|^2 + |\chihats_{12}|^2$ in (\ref{shearbounds}) is satisfied, then we will have $\tr\chi < 0$ at $\geodmap(\Lafftr, \curustr, \thetas, \phis)$. Note that we may choose $\curustr$ and $\Lafftr$ independently of $\thetas$, $\phis$. 
If necessary, we may now adjust the frame $\{ \eframe_\framea \}$ by adding a multiple of $\L$ so as to make it tangent to the surface $\{ \curus = \curustr \} \cap \{ \Laff = \Lafftr \}$, without affecting its orthonormality or $\tr\chi$.
\par
To sum up, we have shown that for every $\curangulvar \in \Sph^2$ there are $\curustr$ and $\Lafftr$ such that a patch of surface in $\{ \curus = \curustr \} \cap \{ \Laff = \Lafftr \}$ around $\{ \curangulvar = (\thetas, \phis) \}$ is trapped. Since the coordinates $\curus$, $\curubars$, as well as the values of $\Laff$ along null geodesics, agree across different coordinate patches in $\Sph^2$, these different patches may be combined into a smooth closed surface which is moreover trapped. This completes the proof. 
\end{proof}
\section{Discussion}\label{discussion}
Perhaps one of the single largest issues which arises in attempting to apply the current method is the necessity of a {\it smooth}, {\it nondegenerate\/} limit metric as $\cscal \rightarrow 0^+$. The limit metric in the unscaled coordinates is highly singular and restricted to a codimension-1 hypersurface, necessitating the unusual step we have taken here of performing a parameter-dependent coordinate change.
\par
Three potential future applications of this method are as follows. Since the existence portion of the above proof is only weakly dependent on the geometry of the intersection 2-surface, we believe that the above results can be adapted to the case where this intersection 2-surface is a spheroid rather than a sphere. Such results have not appeared in the literature, to our knowledge, though it seems possible that the results in \cite{klainchen} could be used to prove a trapped surface formation result for a spheroid with very small (on the order of some positive power of $\delta$) eccentricity. Second, we have indicated above precisely where we have used the Einstein {\it vacuum\/} equations in the foregoing, and it would be of interest to extend the results above to the case where a matter field is also present. Third, we believe that the techniques above should shed light on the recent censorship results of An \cite{ancensor}.
\bibliographystyle{amsplain}
\bibliography{sphere_scale}
\end{document}